
\documentclass{article}

\usepackage{times}
\usepackage{graphicx} 
\usepackage{natbib}
\usepackage{algorithm}
\usepackage{algorithmic}
\usepackage{hyperref}

\usepackage[accepted]{icml2018}

\usepackage{url}            
\usepackage{nicefrac}       
\usepackage{amssymb}
\usepackage{amsmath}
\usepackage{amsthm}
\usepackage{paralist}
\usepackage{subcaption}
\usepackage{setspace}
\usepackage{booktabs}
\usepackage{microtype}

\usepackage{balance}



%
%
%
%
%

\usepackage{amssymb}
\usepackage{amsfonts}
\usepackage{mathrsfs}
\usepackage{xspace}
\usepackage{bm}
\usepackage{upgreek}

\newcommand{\safemath}[2]{\newcommand{#1}{\ensuremath{#2}\xspace}}



\safemath{\bma}{\mathbf{a}}
\safemath{\bmb}{\mathbf{b}}
\safemath{\bmc}{\mathbf{c}}
\safemath{\bmd}{\mathbf{d}}
\safemath{\bme}{\mathbf{e}}
\safemath{\bmf}{\mathbf{f}}
\safemath{\bmg}{\mathbf{g}}
\safemath{\bmh}{\mathbf{h}}
\safemath{\bmi}{\mathbf{i}}
\safemath{\bmj}{\mathbf{j}}
\safemath{\bmk}{\mathbf{k}}
\safemath{\bml}{\mathbf{l}}
\safemath{\bmm}{\mathbf{m}}
\safemath{\bmn}{\mathbf{n}}
\safemath{\bmo}{\mathbf{o}}
\safemath{\bmp}{\mathbf{p}}
\safemath{\bmq}{\mathbf{q}}
\safemath{\bmr}{\mathbf{r}}
\safemath{\bms}{\mathbf{s}}
\safemath{\bmt}{\mathbf{t}}
\safemath{\bmu}{\mathbf{u}}
\safemath{\bmv}{\mathbf{v}}
\safemath{\bmw}{\mathbf{w}}
\safemath{\bmx}{\mathbf{x}}
\safemath{\bmy}{\mathbf{y}}
\safemath{\bmz}{\mathbf{z}}
\safemath{\bmzero}{\mathbf{0}}
\safemath{\bmone}{\mathbf{1}}

\bmdefine{\biad}{a}
\bmdefine{\bibd}{b}
\bmdefine{\bicd}{c}
\bmdefine{\bidd}{d}
\bmdefine{\bied}{e}
\bmdefine{\bifd}{f}
\bmdefine{\bigd}{g}
\bmdefine{\bihd}{h}
\bmdefine{\biid}{i}
\bmdefine{\bijd}{j}
\bmdefine{\bikd}{k}
\bmdefine{\bild}{l}
\bmdefine{\bimd}{m}
\bmdefine{\bind}{n}
\bmdefine{\biod}{o}
\bmdefine{\bipd}{p}
\bmdefine{\biqd}{q}
\bmdefine{\bird}{r}
\bmdefine{\bisd}{s}
\bmdefine{\bitd}{t}
\bmdefine{\biud}{u}
\bmdefine{\bivd}{v}
\bmdefine{\biwd}{w}
\bmdefine{\bixd}{x}
\bmdefine{\biyd}{y}
\bmdefine{\bizd}{z}

\bmdefine{\bixid}{\xi}
\bmdefine{\bilambdad}{\lambda}
\bmdefine{\bimud}{\mu}
\bmdefine{\bithetad}{\theta}
\bmdefine{\biphid}{\phi}
\bmdefine{\bideltad}{\delta}

\safemath{\bmia}{\biad}
\safemath{\bmib}{\bibd}
\safemath{\bmic}{\bicd}
\safemath{\bmid}{\bidd}
\safemath{\bmie}{\bied}
\safemath{\bmif}{\bifd}
\safemath{\bmig}{\bigd}
\safemath{\bmih}{\bihd}
\safemath{\bmii}{\biid}
\safemath{\bmij}{\bijd}
\safemath{\bmik}{\bikd}
\safemath{\bmil}{\bild}
\safemath{\bmim}{\bimd}
\safemath{\bmin}{\bind}
\safemath{\bmio}{\biod}
\safemath{\bmip}{\bipd}
\safemath{\bmiq}{\biqd}
\safemath{\bmir}{\bird}
\safemath{\bmis}{\bisd}
\safemath{\bmit}{\bitd}
\safemath{\bmiu}{\biud}
\safemath{\bmiv}{\bivd}
\safemath{\bmiw}{\biwd}
\safemath{\bmix}{\bixd}
\safemath{\bmiy}{\biyd}
\safemath{\bmiz}{\bizd}

\safemath{\bmxi}{\bixid}
\safemath{\bmlambda}{\bilambdad}
\safemath{\bmmu}{\bimud}
\safemath{\bmtheta}{\bithetad}
\safemath{\bmphi}{\biphid}
\safemath{\bmdelta}{\bideltad}

\safemath{\bA}{\mathbf{A}}
\safemath{\bB}{\mathbf{B}}
\safemath{\bC}{\mathbf{C}}
\safemath{\bD}{\mathbf{D}}
\safemath{\bE}{\mathbf{E}}
\safemath{\bF}{\mathbf{F}}
\safemath{\bG}{\mathbf{G}}
\safemath{\bH}{\mathbf{H}}
\safemath{\bI}{\mathbf{I}}
\safemath{\bJ}{\mathbf{J}}
\safemath{\bK}{\mathbf{K}}
\safemath{\bL}{\mathbf{L}}
\safemath{\bM}{\mathbf{M}}
\safemath{\bN}{\mathbf{N}}
\safemath{\bO}{\mathbf{O}}
\safemath{\bP}{\mathbf{P}}
\safemath{\bQ}{\mathbf{Q}}
\safemath{\bR}{\mathbf{R}}
\safemath{\bS}{\mathbf{S}}
\safemath{\bT}{\mathbf{T}}
\safemath{\bU}{\mathbf{U}}
\safemath{\bV}{\mathbf{V}}
\safemath{\bW}{\mathbf{W}}
\safemath{\bX}{\mathbf{X}}
\safemath{\bY}{\mathbf{Y}}
\safemath{\bZ}{\mathbf{Z}}

\safemath{\bZero}{\mathbf{0}}
\safemath{\bOne}{\mathbf{1}}
\safemath{\bDelta}{\mathbf{\Delta}}
\safemath{\bLambda}{\boldsymbol\Lambda}
\safemath{\bPhi}{\mathbf{\Upphi}}
\safemath{\bSigma}{\mathbf{\Upsigma}}
\safemath{\bOmega}{\mathbf{\Upomega}}
\safemath{\bTheta}{\mathbf{\Uptheta}}

\bmdefine{\biAd}{A}
\bmdefine{\biBd}{B}
\bmdefine{\biCd}{C}
\bmdefine{\biDd}{D}
\bmdefine{\biEd}{E}
\bmdefine{\biFd}{F}
\bmdefine{\biGd}{G}
\bmdefine{\biHd}{H}
\bmdefine{\biId}{I}
\bmdefine{\biJd}{J}
\bmdefine{\biKd}{K}
\bmdefine{\biLd}{L}
\bmdefine{\biMd}{M}
\bmdefine{\biOd}{N}
\bmdefine{\biPd}{O}
\bmdefine{\biQd}{P}
\bmdefine{\biRd}{R}
\bmdefine{\biSd}{S}
\bmdefine{\biTd}{T}
\bmdefine{\biUd}{U}
\bmdefine{\biVd}{V}
\bmdefine{\biWd}{W}
\bmdefine{\biXd}{X}
\bmdefine{\biYd}{Y}
\bmdefine{\biZd}{Z}

\bmdefine{\biDelta}{\Delta}
\bmdefine{\biLambda}{\Lambda}
\bmdefine{\biPhi}{\Phi}
\bmdefine{\biSigma}{\Sigma}
\bmdefine{\biOmega}{\Omega}
\bmdefine{\biTheta}{\Theta}

\safemath{\bimA}{\biAd}
\safemath{\bimB}{\biBd}
\safemath{\bimC}{\biCd}
\safemath{\bimD}{\biDd}
\safemath{\bimE}{\biEd}
\safemath{\bimF}{\biFd}
\safemath{\bimG}{\biGd}
\safemath{\bimH}{\biHd}
\safemath{\bimI}{\biId}
\safemath{\bimJ}{\biJd}
\safemath{\bimK}{\biKd}
\safemath{\bimL}{\biLd}
\safemath{\bimM}{\biMd}
\safemath{\bimN}{\biNd}
\safemath{\bimO}{\biOd}
\safemath{\bimP}{\biPd}
\safemath{\bimQ}{\biQd}
\safemath{\bimR}{\biRd}
\safemath{\bimS}{\biSd}
\safemath{\bimT}{\biTd}
\safemath{\bimU}{\biUd}
\safemath{\bimV}{\biVd}
\safemath{\bimW}{\biWd}
\safemath{\bimX}{\biXd}
\safemath{\bimY}{\biYd}
\safemath{\bimZ}{\biZd}

\safemath{\bimDelta}{\biDelta}
\safemath{\bimLambda}{\biLambda}
\safemath{\bimPhi}{\biPhi}
\safemath{\bimSigma}{\biSigma}
\safemath{\bimOmega}{\biOmega}
\safemath{\bimTheta}{\biTheta}

\safemath{\setA}{\mathcal{A}}
\safemath{\setB}{\mathcal{B}}
\safemath{\setC}{\mathcal{C}}
\safemath{\setD}{\mathcal{D}}
\safemath{\setE}{\mathcal{E}}
\safemath{\setF}{\mathcal{F}}
\safemath{\setG}{\mathcal{G}}
\safemath{\setH}{\mathcal{H}}
\safemath{\setI}{\mathcal{I}}
\safemath{\setJ}{\mathcal{J}}
\safemath{\setK}{\mathcal{K}}
\safemath{\setL}{\mathcal{L}}
\safemath{\setM}{\mathcal{M}}
\safemath{\setN}{\mathcal{N}}
\safemath{\setO}{\mathcal{O}}
\safemath{\setP}{\mathcal{P}}
\safemath{\setQ}{\mathcal{Q}}
\safemath{\setR}{\mathcal{R}}
\safemath{\setS}{\mathcal{S}}
\safemath{\setT}{\mathcal{T}}
\safemath{\setU}{\mathcal{U}}
\safemath{\setV}{\mathcal{V}}
\safemath{\setW}{\mathcal{W}}
\safemath{\setX}{\mathcal{X}}
\safemath{\setY}{\mathcal{Y}}
\safemath{\setZ}{\mathcal{Z}}
\safemath{\emptySet}{\varnothing}

\safemath{\colA}{\mathscr{A}}
\safemath{\colB}{\mathscr{B}}
\safemath{\colC}{\mathscr{C}}
\safemath{\colD}{\mathscr{D}}
\safemath{\colE}{\mathscr{E}}
\safemath{\colF}{\mathscr{F}}
\safemath{\colG}{\mathscr{G}}
\safemath{\colH}{\mathscr{H}}
\safemath{\colI}{\mathscr{I}}
\safemath{\colJ}{\mathscr{J}}
\safemath{\colK}{\mathscr{K}}
\safemath{\colL}{\mathscr{L}}
\safemath{\colM}{\mathscr{M}}
\safemath{\colN}{\mathscr{N}}
\safemath{\colO}{\mathscr{O}}
\safemath{\colP}{\mathscr{P}}
\safemath{\colQ}{\mathscr{Q}}
\safemath{\colR}{\mathscr{R}}
\safemath{\colS}{\mathscr{S}}
\safemath{\colT}{\mathscr{T}}
\safemath{\colU}{\mathscr{U}}
\safemath{\colV}{\mathscr{V}}
\safemath{\colW}{\mathscr{W}}
\safemath{\colX}{\mathscr{X}}
\safemath{\colY}{\mathscr{Y}}
\safemath{\colZ}{\mathscr{Z}}

\safemath{\opA}{\mathbb{A}}
\safemath{\opB}{\mathbb{B}}
\safemath{\opC}{\mathbb{C}}
\safemath{\opD}{\mathbb{D}}
\safemath{\opE}{\mathbb{E}}
\safemath{\opF}{\mathbb{F}}
\safemath{\opG}{\mathbb{G}}
\safemath{\opH}{\mathbb{H}}
\safemath{\opI}{\mathbb{I}}
\safemath{\opJ}{\mathbb{J}}
\safemath{\opK}{\mathbb{K}}
\safemath{\opL}{\mathbb{L}}
\safemath{\opM}{\mathbb{M}}
\safemath{\opN}{\mathbb{N}}
\safemath{\opO}{\mathbb{O}}
\safemath{\opP}{\mathbb{P}}
\safemath{\opQ}{\mathbb{Q}}
\safemath{\opR}{\mathbb{R}}
\safemath{\opS}{\mathbb{S}}
\safemath{\opT}{\mathbb{T}}
\safemath{\opU}{\mathbb{U}}
\safemath{\opV}{\mathbb{V}}
\safemath{\opW}{\mathbb{W}}
\safemath{\opX}{\mathbb{X}}
\safemath{\opY}{\mathbb{Y}}
\safemath{\opZ}{\mathbb{Z}}
\safemath{\opZero}{\mathbb{O}}
\safemath{\identityop}{\opI}


\safemath{\veca}{\bma}
\safemath{\vecb}{\bmb}
\safemath{\vecc}{\bmc}
\safemath{\vecd}{\bmd}
\safemath{\vece}{\bme}
\safemath{\vecf}{\bmf}
\safemath{\vecg}{\bmg}
\safemath{\vech}{\bmh}
\safemath{\veci}{\bmi}
\safemath{\vecj}{\bmj}
\safemath{\veck}{\bmk}
\safemath{\vecl}{\bml}
\safemath{\vecm}{\bmm}
\safemath{\vecn}{\bmn}
\safemath{\veco}{\bmo}
\safemath{\vecp}{\bmp}
\safemath{\vecq}{\bmq}
\safemath{\vecr}{\bmr}
\safemath{\vecs}{\bms}
\safemath{\vect}{\bmt}
\safemath{\vecu}{\bmu}
\safemath{\vecv}{\bmv}
\safemath{\vecw}{\bmw}
\safemath{\vecx}{\bmx}
\safemath{\vecy}{\bmy}
\safemath{\vecz}{\bmz}

\safemath{\veczero}{\bmzero}
\safemath{\vecone}{\bmone}
\safemath{\vecxi}{\bmxi}
\safemath{\veclambda}{\bmlambda}
\safemath{\vecmu}{\bmmu}
\safemath{\vectheta}{\bmtheta}
\safemath{\vecphi}{\bmphi}
\safemath{\vecdelta}{\bmdelta}

\safemath{\matA}{\bA}
\safemath{\matB}{\bB}
\safemath{\matC}{\bC}
\safemath{\matD}{\bD}
\safemath{\matE}{\bE}
\safemath{\matF}{\bF}
\safemath{\matG}{\bG}
\safemath{\matH}{\bH}
\safemath{\matI}{\bI}
\safemath{\matJ}{\bJ}
\safemath{\matK}{\bK}
\safemath{\matL}{\bL}
\safemath{\matM}{\bM}
\safemath{\matN}{\bN}
\safemath{\matO}{\bO}
\safemath{\matP}{\bP}
\safemath{\matQ}{\bQ}
\safemath{\matR}{\bR}
\safemath{\matS}{\bS}
\safemath{\matT}{\bT}
\safemath{\matU}{\bU}
\safemath{\matV}{\bV}
\safemath{\matW}{\bW}
\safemath{\matX}{\bX}
\safemath{\matY}{\bY}
\safemath{\matZ}{\bZ}
\safemath{\matzero}{\bmzero}

\safemath{\matDelta}{\bDelta}
\safemath{\matLambda}{\bLambda}
\safemath{\matPhi}{\bPhi}
\safemath{\matSigma}{\bSigma}
\safemath{\matOmega}{\bOmega}
\safemath{\matTheta}{\bTheta}

\safemath{\matidentity}{\matI}
\safemath{\matone}{\matO}


\safemath{\rnda}{A}
\safemath{\rndb}{B}
\safemath{\rndc}{C}
\safemath{\rndd}{D}
\safemath{\rnde}{E}
\safemath{\rndf}{F}
\safemath{\rndg}{G}
\safemath{\rndh}{H}
\safemath{\rndi}{I}
\safemath{\rndj}{J}
\safemath{\rndk}{K}
\safemath{\rndl}{L}
\safemath{\rndm}{M}
\safemath{\rndn}{N}
\safemath{\rndo}{O}
\safemath{\rndp}{P}
\safemath{\rndq}{Q}
\safemath{\rndr}{R}
\safemath{\rnds}{S}
\safemath{\rndt}{T}
\safemath{\rndu}{U}
\safemath{\rndv}{V}
\safemath{\rndw}{W}
\safemath{\rndx}{X}
\safemath{\rndy}{Y}
\safemath{\rndz}{Z}

\safemath{\rveca}{\bimA}
\safemath{\rvecb}{\bimB}
\safemath{\rvecc}{\bimC}
\safemath{\rvecd}{\bimD}
\safemath{\rvece}{\bimE}
\safemath{\rvecf}{\bimF}
\safemath{\rvecg}{\bimG}
\safemath{\rvech}{\bimH}
\safemath{\rveci}{\bimI}
\safemath{\rvecj}{\bimJ}
\safemath{\rveck}{\bimK}
\safemath{\rvecl}{\bimL}
\safemath{\rvecm}{\bimM}
\safemath{\rvecn}{\bimN}
\safemath{\rveco}{\bomO}
\safemath{\rvecp}{\bimP}
\safemath{\rvecq}{\bimQ}
\safemath{\rvecr}{\bimR}
\safemath{\rvecs}{\bimS}
\safemath{\rvect}{\bimT}
\safemath{\rvecu}{\bimU}
\safemath{\rvecv}{\bimV}
\safemath{\rvecw}{\bimW}
\safemath{\rvecx}{\bimX}
\safemath{\rvecy}{\bimY}
\safemath{\rvecz}{\bimZ}

\safemath{\rvecxi}{\bmxi}
\safemath{\rveclambda}{\bmlambda}
\safemath{\rvecmu}{\bmmu}
\safemath{\rvectheta}{\bmtheta}
\safemath{\rvecphi}{\bmphi}

\safemath{\rmatA}{\bimA}
\safemath{\rmatB}{\bimB}
\safemath{\rmatC}{\bimC}
\safemath{\rmatD}{\bimD}
\safemath{\rmatE}{\bimE}
\safemath{\rmatF}{\bimF}
\safemath{\rmatG}{\bimG}
\safemath{\rmatH}{\bimH}
\safemath{\rmatI}{\bimI}
\safemath{\rmatJ}{\bimJ}
\safemath{\rmatK}{\bimK}
\safemath{\rmatL}{\bimL}
\safemath{\rmatM}{\bimM}
\safemath{\rmatN}{\bimN}
\safemath{\rmatO}{\bimO}
\safemath{\rmatP}{\bimP}
\safemath{\rmatQ}{\bimQ}
\safemath{\rmatR}{\bimR}
\safemath{\rmatS}{\bimS}
\safemath{\rmatT}{\bimT}
\safemath{\rmatU}{\bimU}
\safemath{\rmatV}{\bimV}
\safemath{\rmatW}{\bimW}
\safemath{\rmatX}{\bimX}
\safemath{\rmatY}{\bimY}
\safemath{\rmatZ}{\bimZ}

\safemath{\rmatDelta}{\bimDelta}
\safemath{\rmatLambda}{\bimLambda}
\safemath{\rmatPhi}{\bimPhi}
\safemath{\rmatSigma}{\bimSigma}
\safemath{\rmatOmega}{\bimOmega}
\safemath{\rmatTheta}{\bimTheta}

%
%

\usepackage{amssymb}
\usepackage{amsfonts}
\usepackage{mathrsfs}
\usepackage{xspace}
\usepackage{bm}
\usepackage{fancyref}
\usepackage{textcomp}

\usepackage{multirow}
\usepackage{stmaryrd}


\newenvironment{textbmatrix}{	\setlength{\arraycolsep}{2.5pt}%
								\big[\begin{matrix}}{\end{matrix}\big]%
								\raisebox{0.08ex}{\vphantom{M}}}


\def\be{\begin{equation}}
\def\ee{\end{equation}}
\def\een{\nonumber \end{equation}}
\def\mat{\begin{bmatrix}}
\def\emat{\end{bmatrix}}
\def\btm{\begin{textbmatrix}}
\def\etm{\end{textbmatrix}}

\def\ba#1\ea{\begin{align}#1\end{align}}
\def\bas#1\eas{\begin{align*}#1\end{align*}}
\def\bs#1\es{\begin{split}#1\end{split}} 
\def\bg#1\eg{\begin{gather}#1\end{gather}}
\def\bml#1\eml{\begin{multline}#1\end{multline}}
\def\bi#1\ei{\begin{itemize}#1\end{itemize}}



\newcommand{\lefto}{\mathopen{}\left}



\DeclareMathOperator{\tr}{tr}				
\DeclareMathOperator{\diag}{diag}			
\DeclareMathOperator*{\argmin}{arg\;min}		
\DeclareMathOperator{\kron}{\otimes}			
\DeclareMathOperator{\Exop}{\opE}			


\newcommand{\Ex}[2]{\ensuremath{\Exop_{#1}\lefto[#2\right]}} 	




\safemath{\dirac}{\delta}					
\safemath{\krond}{\dirac}					

\safemath{\upto}{\uparrow}
\safemath{\downto}{\downarrow}
\safemath{\iu}{j}							
\safemath{\ev}{\lambda}						
\safemath{\hilseqspace}{l^{2}}				
\newcommand{\banachfunspace}[1]{\setL^{#1}}	
\safemath{\hilfunspace}{\banachfunspace{2}}	

\safemath{\SNR}{\textsf{SNR}} 				
\safemath{\PAR}{\textsf{PAR}} 				
\safemath{\No}{N_0}							
\safemath{\Es}{E_s}							
\safemath{\Eb}{E_b}							
\safemath{\EbNo}{\frac{\Eb}{\No}}
\safemath{\EsNo}{\frac{\Es}{\No}}

\DeclareMathOperator{\CHop}{\ensuremath{\opH}} 
\safemath{\tvir}{\rndh_{\CHop}}				
\safemath{\tvtf}{\rndl_{\CHop}}				
\safemath{\spf}{\rnds_{\CHop}}				
\safemath{\bff}{H_{\CHop}}					

\safemath{\ircf}{r_{h}}						
\safemath{\tftvcf}{r_{s}}					
\safemath{\tfcf}{r_{l}}						
\safemath{\bfcf}{r_{H}}						

\safemath{\tcorr}{c_h}						
\safemath{\scf}{c_{s}}						
\safemath{\tfcorr}{c_{l}}					
\safemath{\fcorr}{c_{H}}						

\safemath{\mi}{I}							
\safemath{\capacity}{C}						

\safemath{\normal}{\mathcal{N}}			
\safemath{\jpg}{\mathcal{CN}}			
\safemath{\mchain}{\leftrightarrow}		

\safemath{\dB}{\,\mathrm{dB}}
\safemath{\dBm}{\,\mathrm{dBm}}
\safemath{\Hz}{\,\mathrm{Hz}}
\safemath{\kHz}{\,\mathrm{kHz}}
\safemath{\MHz}{\,\mathrm{MHz}}
\safemath{\GHz}{\,\mathrm{GHz}}
\safemath{\s}{\,\mathrm{s}}
\safemath{\ms}{\,\mathrm{ms}}
\safemath{\mus}{\,\mathrm{\text{\textmu}s}}
\safemath{\ns}{\,\mathrm{ns}}
\safemath{\ps}{\,\mathrm{ps}}
\safemath{\meter}{\,\mathrm{m}}
\safemath{\mm}{\,\mathrm{mm}}
\safemath{\cm}{\,\mathrm{cm}}
\safemath{\m}{\,\mathrm{m}}
\safemath{\W}{\,\mathrm{W}}
\safemath{\mW}{\, \mathrm{mW}}
\safemath{\J}{\,\mathrm{J}}
\safemath{\K}{\,\mathrm{K}}
\safemath{\bit}{\,\mathrm{bit}}
\safemath{\nat}{\,\mathrm{nat}}


\safemath{\define}{\triangleq}			

\safemath{\equivalent}{\sim}
\safemath{\distas}{\sim}					
\safemath{\sdiff}{\Delta}				

\safemath{\reals}{\mathbb{R}}
\safemath{\positivereals}{\reals_{+}}
\safemath{\integers}{\mathbb{Z}}
\safemath{\posint}{\integers_{+}}
\safemath{\naturals}{\mathbb{N}}
\safemath{\posnaturals}{\naturals_{+}}
\safemath{\complexset}{\mathbb{C}}
\safemath{\rationals}{\mathbb{Q}}

\newcommand*{\fancyrefapplabelprefix}{app}		
\newcommand*{\fancyrefthmlabelprefix}{thm}		
\newcommand*{\fancyreflemlabelprefix}{lem}		
\newcommand*{\fancyrefcorlabelprefix}{cor}		
\newcommand*{\fancyrefdeflabelprefix}{def}		
\newcommand*{\fancyrefproplabelprefix}{prop}	
\newcommand*{\fancyrefobslabelprefix}{obs}		
\newcommand*{\fancyrefalglabelprefix}{alg}		
\newcommand*{\fancyrefasmlabelprefix}{asm}	    
\newcommand*{\fancyrefasmslabelprefix}{asms}	    
\newcommand*{\fancyreftbllabelprefix}{tbl}	    
\newcommand*{\fancyrefestilabelprefix}{esti}	    

\frefformat{vario}{\fancyrefseclabelprefix}{Section~#1}
\frefformat{vario}{\fancyrefthmlabelprefix}{Theorem~#1}
\frefformat{vario}{\fancyreflemlabelprefix}{Lemma~#1}
\frefformat{vario}{\fancyrefcorlabelprefix}{Corollary~#1}
\frefformat{vario}{\fancyrefdeflabelprefix}{Definition~#1}
\frefformat{vario}{\fancyrefobslabelprefix}{Observation~#1}
\frefformat{vario}{\fancyrefasmlabelprefix}{Assumption~#1}
\frefformat{vario}{\fancyrefasmslabelprefix}{Assumptions~#1}
\frefformat{vario}{\fancyreffiglabelprefix}{Figure~#1}
\frefformat{vario}{\fancyrefapplabelprefix}{Appendix~#1} 
\frefformat{vario}{\fancyrefproplabelprefix}{Proposition~#1}
\frefformat{vario}{\fancyrefalglabelprefix}{Algorithm~#1}
\frefformat{vario}{\fancyrefeqlabelprefix}{(#1)}
\frefformat{vario}{\fancyreftbllabelprefix}{Table~#1}
\frefformat{vario}{\fancyrefestilabelprefix}{Estimator~#1}
\newtheorem{thm}{Theorem}
\newtheorem{cor}{Corollary}   
\newtheorem{prop}{Proposition}

\newtheorem{lem}{Lemma} 

\newtheorem{rem}{Remark}

\newtheorem{asms}{Assumptions}
\newtheorem{esti}{Estimator}

\safemath{\dictab}{[\,\dicta\,\,\dictb\,]}

\safemath{\ysig}{\bmy}
\safemath{\ysighat}{\hat{\ysig}}
\safemath{\ysigdim}{M}
\safemath{\xsig}{\bmx}
\safemath{\xsigdim}{N}
\safemath{\nx}{n_x}
\safemath{\zsig}{\bmz}
\safemath{\zsigdim}{\ysigdim}
\safemath{\rsig}{\bmr}
\safemath{\Adict}{\bA}
\safemath{\Adicttilde}{\widetilde{\Adict}}
\safemath{\Adictdim}{\outputdim\times\xsigdim}
\safemath{\avec}{\bma}
\safemath{\avectilde}{\tilde{\avec}}
\safemath{\Bdict}{\bB}
\safemath{\Bdicttilde}{\widetilde{\Bdict}}
\safemath{\Cdict}{\bC}
\safemath{\cvec}{\bmc}
\safemath{\Ddict}{\bD}
\safemath{\Ddictdim}{\ysigdim\times\xsigdim}
\safemath{\dvec}{\bmd}
\safemath{\Ddicttilde}{\widetilde{\bD}}
\safemath{\Bonb}{\bB}
\safemath{\bvec}{\bmb}
\safemath{\Bonbdim}{\ysigdim\times\ysigdim}
\safemath{\noise}{\bmn}
\safemath{\noisedim}{\ysigim}
\safemath{\err}{\bme}
\safemath{\errdim}{\ysigdim}
\safemath{\errset}{\setE}
\safemath{\nerr}{n_e}
\safemath{\delop}{\bP_\errset}
\safemath{\delopc}{\bP_{{\errset}^c}}

%

\safemath{\cplxi}{\imath}
\safemath{\cplxj}{\jmath}

\safemath{\dict}{\matD}
\safemath{\inputdim}{N}		
\safemath{\outputdim}{M}		
\safemath{\sparsity}{S}	
\safemath{\inputdimA}{{N_a}}	
\safemath{\inputdimB}{{N_b}}	
\safemath{\elemA}{{n_a}}	
\safemath{\elemB}{{n_b}}	
\safemath{\resA}{\matR_a}	
\safemath{\resB}{\matR_b}	
\safemath{\subD}{\matS} 
\safemath{\subA}{\matS_a} 
\safemath{\subB}{\matS_b} 
\safemath{\dicta}{\matA} 	
\safemath{\dictb}{\matB} 	
\safemath{\hollowS}{H}
\safemath{\hollowA}{H_a}
\safemath{\hollowB}{H_b}
\safemath{\cross}{Z}
\safemath{\coh}{\mu_d}			
\safemath{\coha}{\mu_a}			
\safemath{\cohb}{\mu_b}			
\safemath{\mubs}{\nu}	
\safemath{\cohm}{\mu_m} 
\safemath{\dictset}{\setD}	
\safemath{\dictsetp}{\dictset(\coh,\coha,\cohb)}	
\safemath{\dictsetgen}{\dictset_\text{gen}}
\safemath{\dictsetgenp}{\dictsetgen(\coh)}
\safemath{\dictsetonb}{\dictset_\text{onb}}
\safemath{\dictsetonbp}{\dictsetonb(\coh)}

\safemath{\leftside}{U}
\safemath{\rightsideA}{R_a}
\safemath{\rightsideB}{R_b}

\safemath{\indexS}{\setI_S} 

\safemath{\na}{n_a}			
\safemath{\nb}{n_b}			
\safemath{\coeffa}{p_i}	
\safemath{\coeffb}{q_j}	
\safemath{\seta}{\setP}		
\safemath{\setb}{\setQ}     
\safemath{\setw}{\setW}	
\safemath{\setz}{\setZ}	
\safemath{\cola}{\veca}		
\safemath{\colb}{\vecb}		
\safemath{\cold}{\vecd}		
\safemath{\inputvec}{\vecx} 	
\safemath{\error}{\vece}	
\safemath{\noiseout}{\vecz} 	
\safemath{\inputvecel}{x}
\safemath{\inputveca}{\vecx_a}
\safemath{\inputvecb}{\vecx_b}
\safemath{\outputvec}{\vecy}	
\safemath{\lambdamin}{\lambda_{\mathrm{min}}}

\safemath{\elltwo}{\ell_2}
\safemath{\ellone}{\ell_1}
\safemath{\ellzero}{\ell_0}
\safemath{\ellinf}{\ell_\infty}
\safemath{\ellinftilde}{\ell_{\widetilde\infty}}
\safemath{\licard}{Z(\coh,\coha,\cohb)}
\safemath{\xsol}{\hat{x}}
\safemath{\xbord}{x_b}		
\safemath{\xstat}{x_s}		
\safemath{\xstatLone}{\tilde{x}_s}
\safemath{\order}{\mathcal{O}} 
\safemath{\scales}{\Theta} 
\safemath{\ones}{\mathbf{1}} 
\safemath{\zeroes}{\mathbf{0}} 
\safemath{\thlone}{\kappa(\coh,\cohb)} 
\safemath{\constoneA}{\delta} 
\safemath{\constoneB}{\epsilon} 
\safemath{\nlarge}{L}				   
\safemath{\sumlarge}{S_\nlarge}
\safemath{\maxlarger}{P_\nlarge}	   
\safemath{\Pzero}{\textrm{P0}}	
\safemath{\Pone}{\textrm{P1}}
\safemath{\vecfir}{\vecw}			 
\safemath{\vecsec}{\vecz}
\safemath{\elvecfir}{w}              
\safemath{\elvecsec}{z}				 
\safemath{\nlargefir}{n}
\safemath{\normout}{\gamma}
\safemath{\auxfun}{h}
\safemath{\supp}{\textrm{supp}}

\safemath{\indexa}{\ell}
\safemath{\indexb}{r}
\safemath{\indexc}{i}
\safemath{\indexd}{j}

\safemath{\project}{P}



\allowdisplaybreaks




\usepackage{color}






\icmltitlerunning{Linear Spectral Estimators and an Application to Phase Retrieval}

\begin{document}

\twocolumn[
\icmltitle{Linear Spectral Estimators and an Application to Phase Retrieval}

\icmlsetsymbol{equal}{*}

\begin{icmlauthorlist}
	\icmlauthor{Ramina Ghods}{cu}
	\icmlauthor{Andrew S. Lan}{pr}
	\icmlauthor{Tom Goldstein}{mr}
	\icmlauthor{Christoph Studer}{cu}
\end{icmlauthorlist}

\icmlaffiliation{cu}{School~of Electrical and Computer Engineering, Cornell University, Ithaca, NY}
\icmlaffiliation{pr}{Department of EE, Princeton University}
\icmlaffiliation{mr}{University of Maryland}

\icmlcorrespondingauthor{Ramina Ghods}{rg548@cornell.edu}
\icmlcorrespondingauthor{Christoph Studer}{studer@cornell.edu}

\icmlkeywords{[phase retrieval], machine learning, ICML}

\vskip 0.3in
]



\printAffiliationsAndNotice{} 

%

\begin{abstract}
Phase retrieval refers to the problem of recovering real-  or complex-valued vectors from magnitude measurements.
The best-known algorithms for this problem are iterative in nature and rely on so-called spectral initializers that provide accurate initialization vectors.
We propose a novel class of estimators suitable for general nonlinear measurement systems, called linear spectral estimators~(LSPEs), which can be used to compute accurate initialization vectors for phase retrieval problems.
The proposed LSPEs not only provide accurate initialization vectors for noisy phase retrieval systems with structured or random measurement matrices, but also enable the derivation of sharp and nonasymptotic mean-squared error bounds.
We demonstrate the efficacy of LSPEs on synthetic and real-world phase retrieval problems, and show that our estimators significantly outperform existing methods for structured measurement systems that arise in practice. 
\end{abstract}





\section{Introduction}
\label{sec:introduction}
Phase retrieval refers to the problem of recovering an unknown $N$-dimensional signal vector $\bmx\in\setH^N$, with $\setH$ being the set of either real ($\reals$) or complex ($\complexset$) numbers, from the following nonlinear measurement process:  
\begin{align} \label{eq:measurementmodel}
\bmy = f(\bA\bmx+\bme^z)+\bme^y.
\end{align}
Here, the measurement vector $\bmy\in\reals^M$ contains $M$ real-valued observations, for example measured through the nonlinear function $f(z)=|z|^2$ that operates element-wise on vectors, $\bA\in\setH^{M\times N}$ is a given measurement matrix, and the vectors $\bme^z\in\setH^M$ and $\bme^y\in\reals^N$ model signal and measurement noises, respectively.
In contrast to the majority of existing results on phase retrieval that assume randomness in the measurement matrix $\bA$, we focus on the practical scenario in which the measurement matrix~$\bA$ is deterministic, but the signal vector $\bmx$ to be recovered as well as the two noise sources $\bme^z$ and $\bme^y$ are random.

\subsection{Phase Retrieval}
Phase retrieval has been studied extensively over the last decades~\cite{gerchberg1972practical,fienup1982phase} and finds use in a  range of applications, including imaging \cite{fogel2013phase,yeh2015experimental,holloway2016}, microscopy \cite{kou2010transport,faulkner2004movable}, and X-ray crystallography \cite{harrison1993phase,miao2008extending,pfeiffer2006phase}.
Phase retrieval problems were solved traditionally using alternating projection methods, such as the Gerchberg-Saxton~\cite{gerchberg1972practical} and Fienup~\cite{fienup1982phase} algorithms. More recent results have shown that semidefinite programming enables the design of algorithms with  performance guarantees~\cite{candes2013phaselift,candes2014solving,candes2015phase,waldspurger2015phase}. These methods lift the problem to a higher dimension, resulting in excessive complexity and memory requirements.
To perform phase retrieval for high-dimensional problems with performance guarantees, a range of convex  \cite{bahmani2017phase,phasemax,hand2016elementary,dhifallah2017phase,dhifallah2017fundamental,yuan2017phase,SAH18} and nonconvex methods~\cite{netrapalli2013phase,schniter2015compressive,candes2015wirtinger,chen2015solving,zhang2016reshaped,wang2016solving,zhang2016provable,wei2015solving,sun2016geometric,zeng2017coordinate,lu2017phase,ma2018optimization} have been proposed recently.

\subsection{Spectral Initializers}
All of the above non-lifting-based  phase retrieval methods rely on accurate initial estimates of the signal vector to be recovered. Such estimates are typically obtained by means of so-called \emph{spectral initializers} put forward in \cite{netrapalli2013phase}.
Spectral initializers  first compute a Hermitian matrix of the following form:
\begin{align} \label{eq:commonspectralinitializer}
\bD_\beta =   \beta\sum_{m=1}^M \setT(y_m)\bma_m\bma_m^H,
\end{align}
where $\beta>0$ is a suitably-chosen scaling factor, $y_m$ denotes the $m$th measurement, $\bma_m^H$ corresponds to the $m$th row of the measurement matrix~$\bA$ and $\setT: \reals\to\reals$ is a (possibly nonlinear) \emph{preprocessing function}. While the identity $\setT(y)=y$ was used originally in \cite{netrapalli2013phase}, recent results revealed that carefully crafted preprocessing functions yield more accurate estimates~\cite{chen2015solving,chen2015phase,wang2016solving,wang2017solving,lu2017phase,mondelli2017fundamental}. 
From the matrix~$\bD_\beta$ in \fref{eq:commonspectralinitializer}, one then extracts the (scaled) eigenvector~$\hat\bmx$ associated with the largest eigenvalue, which serves as an initial estimate of the solution to the phase retrieval problem.

As shown in~\cite{netrapalli2013phase,chen2015solving,chen2015phase,wang2016solving,wang2017solving,lu2017phase,mondelli2017fundamental},  
for i.i.d.\ Gaussian measurement matrices~$\bA$, sufficiently large measurement ratios $\delta=M/N$, and carefully crafted preprocessing functions~$\setT$, spectral initializers provide accurate initialization vectors.  
In fact, the results in \cite{mondelli2017fundamental} for the large-system limit with  $\delta$ fixed and $M\to\infty$ show that spectral initializers in combination with an optimal preprocessing function $\setT$ achieve the fundamental information-theoretic limits of phase retrieval. 
However, the assumption of having i.i.d.\ Gaussian measurement matrices~$\bA$ is impractical---it is more natural to assume that the signal vector $\bmx$ is random and the measurement matrix $\bA$ is deterministic and structured \cite{bendory2016non}.

\subsection{Contributions}
We propose a novel class of estimators, called \emph{linear spectral estimators (LSPEs)}, that provide accurate estimates for general nonlinear measurement systems of the form~\fref{eq:measurementmodel} and enable a nonasymptotic mean-squared error (MSE) analysis. 
We showcase the efficacy of LSPEs by applying them to phase retrieval problems, where we compute initialization vectors for real- and complex-valued systems with deterministic and finite-dimensional measurement matrices.
For the proposed LSPEs, we derive nonasymptotic and sharp bounds on the MSE for signal estimation from phaseless measurements.
We use synthetic and real-world phase retrieval problems to demonstrate that LSPEs are able to significantly outperform existing spectral initializers on systems that acquire structured measurements.
We furthermore show that preprocessing the phaseless measurements enables LSPEs to generate improved initialization vectors for an even broader class of measurement systems.

\subsection{Notation}
Lowercase and uppercase boldface letters represent column vectors and matrices, respectively. For a matrix $\bA$, its transpose and Hermitian conjugate is $\bA^T$ and $\bA^H$, respectively, and the $k$th row and $\ell$th column entry is~$[\bA]_{k,\ell}=A_{k,\ell}$. For a vector~$\bma$, the $k$th entry is~$[\bma]_k=a_k$.
The $\ell_2$-norm of~$\bma$ is denoted by~$\|\bma\|_2$ and the Frobenius norm of~$\bA$ by $\|\bA\|_F$. The Kronecker product is~$\kron$, the  Hadamard product is $\odot$, the  Hadamard division is $\oslash$,  and the trace operator is $\tr(\cdot)$. The $N\times N$ identity matrix is denoted by $\bI_{N}$; the $M\times N$ all-zeros and all-ones matrices are denoted by $\bZero_{M\times N}$ and $\bOne_{M\times N}$, respectively. 
For a vector $\bma$, $\diag(\bma)$ is a square matrix with $\bma$ on the main diagonal; for a matrix $\bA$, $\diag(\bA)$ is a column vector containing the diagonal elements of~$\bA$. 
%
%



\section{Linear Spectral Estimators}
\label{sec:LMSPE}
We start by reviewing the essentials of spectral initializers and then, introduce linear spectral estimators (LSPEs) for  measurement systems of the form \fref{eq:measurementmodel} with general nonlinearities $f$. We furthermore provide nonasymptotic expressions for the associated estimation error, and we compare our analytical results to that of conventional spectral initializers in \fref{eq:commonspectralinitializer}. In \fref{sec:phaseretrieval}, we will apply LSPEs to phase retrieval.

\subsection{Spectral Estimation and Initializers}
One of the key issues of the phase retrieval problem is the fact that if~$\bmx$ is a solution to \fref{eq:measurementmodel}, then~$e^{j\phi}\bmx$  for any $\phi\in[0,2\pi)$ is also a valid solution (assuming $\setH=\complexset$). Put simply, the solution is nonunique up to a global phase shift. 
One way of combating this issue is to directly recover the outer product~$\bmx\bmx^{H}$ instead of $\bmx$, which is unaffected by phase shifts; this insight is the key underlying lifting-based phase retrieval methods~\cite{candes2013phaselift,candes2014solving,candes2015phase,waldspurger2015phase}.
With this in mind, one could envision the design of an estimator that directly minimizes the conditional MSE:
\begin{align} \label{eq:PME}
\dot\bmx = \argmin_{\tilde{\bmx}\in\setH^{N}}  \Ex{}{\|\bmx\bmx^{H}-\tilde{\bmx}\tilde{\bmx}^H\|_F^2\mid\bmy}\!.
\end{align}
Here, expectation is with respect to the signal vector~$\bmx$ and the two noise sources $\bme^z$ and $\bme^y$.  This optimization problem resembles that of a posterior mean estimator (PME) which is, in general, difficult to derive, even for simple observation models---for phase retrieval, we have two additional challenges: (i) nonlinear phaseless measurements as in~\fref{eq:measurementmodel} and (ii) the quantity $\tilde\bmx\tilde\bmx^H$ has rank-1.

Spectral initializers avoid the issues of the estimator in \fref{eq:PME} by first replacing the true outer product $\bmx\bmx^H$ with a so-called spectral estimator matrix $\bD_\beta$ as in~\fref{eq:commonspectralinitializer} that depends on the measurement vector $\bmy$. In a second step, one then computes the best rank-1 approximation as follows:
\begin{align} \label{eq:spectralestimatormatrix}
\hat\bmx  =\argmin_{\tilde{\bmx}\in\setH^{N}} \|\bD_\beta-\tilde\bmx\tilde\bmx^{H}\|_F^2 
\end{align}
from which the estimate $\hat\bmx$ can be extracted.
By performing an eigenvalue decomposition $\bD_\beta=\bU\bLambda\bU^H$ with $\bU^H\bU=\bI_M$ and the eigenvalues in the diagonal matrix $\bLambda=\mathrm{diag}([\lambda_1,\ldots,\lambda_M]^T)$ are sorted in descending order of their magnitudes, a spectral initializer is given by the \emph{scaled leading eigenvector} $\hat\bmx=\sqrt{\lambda_1}\bmu_1$.
In practice, one can use power iterations to efficiently compute  $\hat\bmx$.

\subsection{Linear Spectral Estimators}
We now propose a novel class of estimators, which we call \emph{linear spectral estimators (LSPEs)}, that provide accurate estimates for general nonlinear measurement systems of the form~\fref{eq:measurementmodel}. To this end, we borrow ideas from the spectral initializer, the PME in \fref{eq:PME}, and the linear phase retrieval algorithm put forward in \cite{ghods2018phaselin}.
In the first step, LSPEs apply a linear estimator to the nonlinear observations in~$\setT(\bmy)$ to construct a \emph{spectral estimator matrix} $\bD_\bmy$ for which the \emph{spectral MSE} (or matrix MSE) defined as
\begin{align} \label{eq:MSE}
\textit{S-MSE} = \Ex{}{\left\| \bD_\bmy -\bmx\bmx^{H}\right\|_F^2}
\end{align}
is minimal.  We restrict ourselves to spectral estimator matrices $\bD_\bmy$ that are \emph{affine} in~$\setT(\bmy)$, i.e., are of the form
\begin{align} \label{eq:ansatz}
\bD_\bmy = \bW_0 + \sum_{m=1}^M \setT(y_m) \bW_m
\end{align}
with $\bW_m\in\setH^{N\times N}$, $m=0,\ldots,M$.
In the second step, we use the spectral estimator matrix $\bD_\bmy$ to extract a (scaled) leading eigenvector as  in \fref{eq:PME}, which is the \emph{linear spectral estimate} of the signal vector~$\bmx$.
Intuitively, if we can construct a matrix $\bD_\bmy$ from the preprocessed measurements in $\setT(\bmy)$ for which the S-MSE in~\fref{eq:MSE} is minimal, then we expect that computing its best rank-1 approximation would yield an accurate estimate of the signal vector~$\bmx$ up to a global phase shift. We will justify this claim in \fref{sec:MSEandAngularError}.

Mathematically, we wish to compute a matrix $\bD_\bmy$ of the form \fref{eq:ansatz} that is the solution to the following  problem:
\begin{align} \label{eq:optimalestimator}
\underset{\begin{subarray}{c}
\widetilde\bW_{m}\in\setH^{N\times N} \\ m=0,\ldots,M
\end{subarray}}{\mathrm{minimize}} \Ex{}{\left\| \widetilde{\bW}_0 +\! \sum_{m=1}^M \setT(y_m) \widetilde\bW_m\!-\bmx\bmx^{H}\right\|_F^2}\!\!.
\end{align}
Clearly,  the spectral estimator matrix $\bD_\bmy$ will depend on the measurement matrix $\bA$, the statistics of the signal to be estimated $\bmx$ and the two noise sources $\bme^z$ and $\bme^y$, the nonlinearity~$f$, as well as the preprocessing function~$\setT$. 
For this setting, we have the following general result which summarizes the LSPE; the proof is given in \fref{app:LMSPE}.
\begin{thm}[Linear Spectral Estimator] \label{thm:LMSPE}
Let the measurement vector $\bmy$ be a result of the general measurement model in \fref{eq:measurementmodel} and select a preprocessing function $\setT$.
Define the vector $\overline\setT(\bmy)= \Ex{}{\setT(\bmy)}$ and assume the matrix
\begin{align*}
\bT = \Ex{}{(\setT(\bmy)-\overline\setT(\bmy))(\setT(\bmy)-\overline\setT(\bmy))^T}
\end{align*}
is full rank. Let $\bmt\in\reals^M$ satisfy $\bT \bmt  = \setT(\bmy)-\overline\setT(\bmy)$  and 
\begin{align*}
\bV_m = \Ex{}{(\setT(y_m)-\overline\setT(y_m))(\bmx\bmx^{H}-\bK_{\bmx})}
\end{align*}
for $m=1,\ldots,M$ with $\bK_{\bmx}=\Ex{}{\bmx\bmx^{H}}$. 
Then, the LSPE matrix that minimizes the S-MSE in \fref{eq:MSE} is given by
\begin{align} \label{eq:LMSPEmatrix}
\bD_\bmy = \bK_{\bmx} + \sum_{m=1}^M t_m  \bV_m.
\end{align}
The linear spectral estimate $\hat\bmx$ is then given by the scaled leading eigenvector of the matrix $\bD_\bmy$ in \fref{eq:LMSPEmatrix}.
\end{thm}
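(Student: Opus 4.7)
The plan is to exploit the fact that the objective is quadratic in the free parameters $\{\bW_m\}_{m=0}^M$, so the optimum is characterized by linear first-order conditions that can be solved in closed form.

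First I would reparametrize to decouple $\bW_0$ from the other $\bW_m$. Let $\bm\tau = \setT(\bmy)-\overline\setT(\bmy)$ and $\bXi = \bmx\bmx^H - \bK_\bmx$, both of which have zero mean. Introduce the change of variable $\bW_0' = \bW_0 + \sum_{m=1}^M \overline\setT(y_m)\bW_m - \bK_\bmx$; since $\bW_0$ ranges freely, so does $\bW_0'$. The residual then decomposes as
\begin{align*}
\bD_\bmy - \bmx\bmx^H = \bW_0' + \sum_{m=1}^M \tau_m \bW_m - \bXi.
\end{align*}
Expanding the squared Frobenius norm and taking expectation, the cross terms involving $\bW_0'$ vanish because $\Ex{}{\tau_m}=0$ and $\Ex{}{\bXi}=\bZero$, leaving $\|\bW_0'\|_F^2 + \Ex{}{\|\sum_m \tau_m \bW_m - \bXi\|_F^2}$. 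The optimum in $\bW_0'$ is therefore trivially $\bZero$, which pins down $\bW_0 = \bK_\bmx - \sum_{m=1}^M \overline\setT(y_m)\bW_m$.

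Next, I would expand what remains into the quadratic form
\begin{align*}
\sum_{k,m} T_{km}\langle \bW_k,\bW_m\rangle_F - 2\sum_m \langle \bW_m, \bV_m\rangle_F + \Ex{}{\|\bXi\|_F^2},
\end{align*}
using the definitions of $\bT$ and $\bV_m$ from the theorem statement (real inner products; in the complex case one works entrywise and real/imaginary parts decouple since the $W_m$ entries are free complex numbers, or equivalently one uses Wirtinger calculus). Differentiating with respect to each $\bW_k$ and setting the gradient to zero yields the matrix-valued linear system $\sum_{m=1}^M T_{km}\bW_m = \bV_k$ for $k=1,\dots,M$. Since $\bT$ is assumed full rank, this inverts to $\bW_k = \sum_{m=1}^M [\bT^{-1}]_{km}\bV_m$.

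Finally, I would substitute back. Combining the expressions for $\bW_0$ and $\bW_k$,
\begin{align*}
\bD_\bmy = \bK_\bmx + \sum_{k=1}^M \tau_k \bW_k = \bK_\bmx + \sum_{m=1}^M \Bigl(\sum_{k=1}^M [\bT^{-1}]_{km}\tau_k\Bigr)\bV_m = \bK_\bmx + \sum_{m=1}^M t_m\bV_m,
\end{align*}
where $\bmt = \bT^{-1}\bm\tau$ is exactly the unique solution to $\bT\bmt = \setT(\bmy)-\overline\setT(\bmy)$. This is the claimed form in \fref{eq:LMSPEmatrix}. The only real obstacle is the bookkeeping of the Hermitian case: I would verify carefully that treating the entries of each $\bW_m$ as independent complex scalars (real and imaginary parts separately) does not introduce additional constraints beyond what the quadratic minimization already enforces, and that the stationary point is a global minimum by convexity of the Frobenius norm in the parameters.
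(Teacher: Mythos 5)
Your proposal is correct and follows essentially the same route as the paper's proof: both reduce the problem to the normal equations of a quadratic least-squares objective, first eliminating $\bW_0$ (you via a centering change of variables, the paper via a derivative in $\widetilde\bW_0^H$, which yield the identical expression $\bW_0 = \bK_\bmx - \sum_{m}\overline\setT(y_m)\bW_m$), then solving $\sum_{m'}T_{m,m'}\bW_{m'}=\bV_m$ by inverting $\bT$ and substituting back to obtain $\bD_\bmy = \bK_\bmx + \sum_m t_m\bV_m$ with $\bmt=\bT^{-1}(\setT(\bmy)-\overline\setT(\bmy))$. The Hermitian-case bookkeeping you flag is handled in the paper by Wirtinger-style derivatives in $\bW_m^H$, which is equivalent to your entrywise real/imaginary decoupling.
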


The vector $\bmt$ is the only quantity in \fref{thm:LMSPE} that depends on the \emph{actual} (nonlinear) observations contained in the measurement vector $\bmy$. 
All other quantities depend only on the first two moments of~$\bmx\bmx^H$ as well as the considered signal, noise, and measurement models.
 The key features of the LSPE are as follows: (i) the involved quantities can often be computed in closed form (see \fref{sec:phaseretrieval} for two applications to phase retrieval) and (ii) LSPEs enable a nonasymptotic and sharp analysis of the associated estimation error.

\begin{rem}
\fref{thm:LMSPE} requires the matrix $\bT$ to be invertible. This condition is satisfied in most practical situations with nondegenerate measurement matrices~$\bA$ or in situations with nonzero measurement noise.
\end{rem}

\subsection{Estimation Error Analysis of LSPEs}
\label{sec:MSEandAngularError}

The remaining piece of the proposed LSPE is to show that the result of this two-step estimation procedure indeed yields a vector that is close to the signal vector $\bmx$. 
We start with the following result; the proof is given in \fref{app:MSEofLMSPE}.
\begin{thm}[S-MSE of the LSPE] \label{thm:MSEofLMSPE}
Let the assumptions of \fref{thm:LMSPE} hold.
Then, the S-MSE in \fref{eq:MSE} for the LSPE matrix in~\fref{eq:LMSPEmatrix} is given by  
\begin{align} \label{eq:MSEofLMSPE}
\textit{S-MSE}_\text{LSPE} = C_{\bmx\bmx^{H}} \!-\!\!  \sum_{m=1}^M \sum_{m'=1}^M [ \bT^{-1}]_{m,m'} \tr\!\left( \bV^H_m \bV_{m'} \right) 
\end{align}
with $C_{\bmx\bmx^{H}}=\Ex{}{\left\|  \bmx\bmx^{H}-\bK_{\bmx}\right\|_F^2}$.
\end{thm}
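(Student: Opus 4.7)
The plan is to directly expand the Frobenius norm in the definition of the S-MSE and exploit the closed-form expression for $\bmt$ and the definitions of $\bT$ and $\bV_m$ given in Theorem~1. Introduce the shorthand $\bmDelta = \bmx\bmx^H - \bK_\bmx$ and $\widetilde\setT(\bmy) = \setT(\bmy)-\overline\setT(\bmy)$, so that $\bmt = \bT^{-1}\widetilde\setT(\bmy)$ (note that, since $\setT$ maps into $\reals$ and $\bT$ is a real symmetric covariance, $\bmt$ is real-valued and $\bT^{-1}$ is symmetric), and $\bV_m = \Ex{}{\widetilde\setT(y_m)\,\bmDelta}$. From \fref{eq:LMSPEmatrix}, the estimation error decomposes as $\bD_\bmy-\bmx\bmx^H = -\bmDelta + \sum_{m=1}^M t_m \bV_m$, which I would plug into the Frobenius norm to obtain three terms.

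First, $\Ex{}{\|\bmDelta\|_F^2} = C_{\bmx\bmx^H}$ by definition. Second, for the quadratic (self) term, I would use $\|\cdot\|_F^2=\tr(\cdot^H\cdot)$ and pull expectations inside the deterministic $\bV_m$'s to get
\begin{align*}
\Ex{}{\Big\|\sum_m t_m\bV_m\Big\|_F^2}
= \sum_{m,m'}\Ex{}{t_m t_{m'}}\tr(\bV_m^H \bV_{m'}).
\end{align*}
The key algebraic identity is $\Ex{}{t_m t_{m'}} = [\bT^{-1}]_{m,m'}$, which follows from $\Ex{}{\bmt\bmt^T} = \bT^{-1}\Ex{}{\widetilde\setT(\bmy)\widetilde\setT(\bmy)^T}\bT^{-1} = \bT^{-1}\bT\bT^{-1} = \bT^{-1}$. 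Third, for the cross term I would compute
\begin{align*}
\Ex{}{\tr\!\Big(\bmDelta^H\sum_m t_m \bV_m\Big)}
= \sum_m \tr\!\big(\bV_m^H\,\Ex{}{t_m\bmDelta}\big),
\end{align*}
and then use $\Ex{}{t_m\bmDelta} = \sum_i [\bT^{-1}]_{m,i}\Ex{}{\widetilde\setT(y_i)\bmDelta} = \sum_i [\bT^{-1}]_{m,i}\bV_i$ to see that this cross term equals $\sum_{m,m'}[\bT^{-1}]_{m,m'}\tr(\bV_m^H \bV_{m'})$, i.e., exactly the self term.

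Combining the three contributions with the correct signs ($+C_{\bmx\bmx^H}$ from the first, $-2$ times the double sum from the cross term, and $+1$ times the double sum from the self term) collapses the last two into a single $-\sum_{m,m'}[\bT^{-1}]_{m,m'}\tr(\bV_m^H \bV_{m'})$, yielding \fref{eq:MSEofLMSPE}. There is no genuine obstacle here; the only care required is (i) verifying that $\bmt$ is real so that no conjugates appear on the $t_m$'s, (ii) using symmetry of $\bT^{-1}$ to match indices between the cross and self terms, and (iii) noting that all traces $\tr(\bV_m^H\bV_{m'})$ combine with $[\bT^{-1}]_{m,m'}$ into a real quantity because $\bT^{-1}$ is symmetric and $\tr(\bV_m^H\bV_{m'}) = \overline{\tr(\bV_{m'}^H\bV_m)}$.
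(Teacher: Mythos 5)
Your proof is correct and follows essentially the same route as the paper's: expand the Frobenius norm of $-(\bmx\bmx^H-\bK_\bmx)+\sum_m t_m\bV_m$ into the constant, quadratic, and cross terms, use $\Ex{}{\bmt\bmt^T}=\bT^{-1}$ for the quadratic term and $\Ex{}{t_m(\bmx\bmx^{H}-\bK_{\bmx})}=\sum_{m'}[\bT^{-1}]_{m,m'}\bV_{m'}$ for the cross terms, and combine. The paper carries out the identical computation, merely phrasing the quadratic term in Kronecker-product notation with the stacked matrix $\underline\bV$.
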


With this result, we are ready to establish a bound on the estimation error of the LSPE. The proof of the following result follows from \fref{thm:MSEofLMSPE} and is given in \fref{app:LMSPEapproxerror}.

\begin{cor}[LSPE Estimation Error] \label{cor:LMSPEapproxerror}
Let the assumptions of \fref{thm:LMSPE} hold. Then, the estimation error (EER) of the LSPE satisfies the following inequality:
\begin{align} \label{eq:estimationerror}
\textit{EER}_\text{LSPE} = \Ex{}{ \| \hat\bmx\hat\bmx^H- \bmx\bmx^{H} \|_F^2 } \leq 4\, \textit{S-MSE}_\text{LSPE}.
\end{align}
\end{cor}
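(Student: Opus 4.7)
The plan is to exploit the fact that, by construction, $\hat\bmx\hat\bmx^H$ is the best rank-one Hermitian PSD approximation to $\bD_\bmy$ in Frobenius norm, and then apply the triangle inequality about the pivot $\bD_\bmy$. No spectral perturbation machinery (e.g., Davis--Kahan) is needed; the factor of $4$ will appear simply from squaring the resulting triangle-inequality bound.

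First, I would recall from \fref{eq:spectralestimatormatrix}, now applied with the LSPE matrix $\bD_\bmy$ of \fref{thm:LMSPE} in place of the generic $\bD_\beta$, that the estimate is defined by $\hat\bmx = \argmin_{\tilde\bmx\in\setH^N} \|\bD_\bmy - \tilde\bmx\tilde\bmx^H\|_F^2$. Since $\bmx \in \setH^N$, the choice $\tilde\bmx = \bmx$ is itself a feasible competitor in this optimization, so the optimality of $\hat\bmx$ yields, realization by realization, the inequality $\|\bD_\bmy - \hat\bmx\hat\bmx^H\|_F \le \|\bD_\bmy - \bmx\bmx^H\|_F$.

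Next, I would apply the triangle inequality to the decomposition $\hat\bmx\hat\bmx^H - \bmx\bmx^H = (\hat\bmx\hat\bmx^H - \bD_\bmy) + (\bD_\bmy - \bmx\bmx^H)$ and combine it with the previous line to obtain $\|\hat\bmx\hat\bmx^H - \bmx\bmx^H\|_F \le 2\,\|\bD_\bmy - \bmx\bmx^H\|_F$. Squaring both sides (both are nonnegative) and taking expectation over the randomness in $\bmx$, $\bme^z$, and $\bme^y$ then gives $\Ex{}{\|\hat\bmx\hat\bmx^H - \bmx\bmx^H\|_F^2} \le 4\,\Ex{}{\|\bD_\bmy - \bmx\bmx^H\|_F^2}$, and the right-hand side equals $4\,\textit{S-MSE}_{\text{LSPE}}$ by the definition \fref{eq:MSE}, which completes the argument.

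The only real obstacle is essentially bookkeeping: verifying that $\bmx\bmx^H$ is a valid candidate in the rank-one optimization defining $\hat\bmx$, which is immediate. Beyond the hypotheses of \fref{thm:LMSPE}, no further assumption on the spectrum of $\bD_\bmy$ (e.g., eigengap, sign of the leading eigenvalue) is required, and the bound is dimension-free. The constant $4$ is unlikely to be improvable by this triangle-inequality route without extra structural information about the proximity of $\bD_\bmy$ to $\bmx\bmx^H$, but since the S-MSE itself admits the closed form provided by \fref{thm:MSEofLMSPE}, the resulting bound is already fully explicit.
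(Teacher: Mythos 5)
Your proposal is correct and follows essentially the same route as the paper's proof in \fref{app:LMSPEapproxerror}: pivot about $\bD_\bmy$, invoke the optimality of $\hat\bmx\hat\bmx^H$ as the best rank-one approximation of $\bD_\bmy$ (with $\bmx$ as a feasible competitor) to bound $\|\hat\bmx\hat\bmx^H-\bD_\bmy\|_F$ by $\|\bD_\bmy-\bmx\bmx^H\|_F$, and take expectations. The only cosmetic difference is that you apply the plain triangle inequality and square at the end, whereas the paper applies the squared triangle inequality $\|\bA+\bB\|_F^2\leq 2\|\bA\|_F^2+2\|\bB\|_F^2$ first; both yield the same constant $4$.
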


This result implies that by minimizing the S-MSE in \fref{eq:MSE} via~\fref{eq:optimalestimator}, we are also reducing the EER of the LSPE. In other words, if the spectral error $\bE = \bD_\bmy - \hat\bmx\hat\bmx^H$ is small, then the EER of the LSPE~\fref{eq:estimationerror} will be small.

\begin{rem}
\fref{cor:LMSPEapproxerror} is nonasymptotic and depends on the instance of measurement matrix $\bA$. This result is in stark contrast to existing performance bounds for spectral initializers~\cite{netrapalli2013phase,chen2015solving,chen2015phase,wang2016solving,wang2017solving} that strongly rely on randomness in the measurement matrix. In addition to randomness, the sharp performance guarantees in \cite{lu2017phase,mondelli2017fundamental} focus on the asymptotic regime for which $\delta=M/N$ is fixed and $M\to\infty$.
\end{rem}

\subsection{S-MSE of Spectral Initializers}
We can also derive an exact expression for the S-MSE of the conventional spectral initializer in~\fref{eq:commonspectralinitializer}. We assume optimal scaling, i.e., the parameter $\beta$  is set to minimize the S-MSE. The following result characterizes the S-MSE of such a scaled spectral initializer; the proof is given in \fref{app:SIapproxerror}. 
\begin{prop}[S-MSE of the Spectral Initializer] \label{prop:SIapproxerror}
Let  $\bD_\beta$ be the conventional spectral initializer matrix in~\fref{eq:commonspectralinitializer}. Then, the optimally-scaled S-MSE defined as
\begin{align} \label{eq:MSEoptimallyscaled}
\textit{S-MSE}_\text{SI} = \min_{\beta\in\setH} \,\Ex{}{ \| \bD_\beta - \bmx\bmx^{H}\|_F^2 }
\end{align}
is given by
\begin{align}  \label{eq:MSESI}
\textit{S-MSE}_\text{SI} = R_{\bmx\bmx^H}\! - \frac{\left| \sum_{m=1}^M \bma_m^H \widetilde{\bV}_m \bma_m \right|^2}{ \sum_{m=1}^M\sum_{m'=1}^M \widetilde{T}_{m,m'}|\bma_m^H\bma_{m'}|^2},
\end{align}
where $R_{\bmx\bmx^H}=\Ex{}{\|\bmx\bmx^{H}\|_F^2}$, $\widetilde\bV_m=\Ex{}{\setT(y_m)\bmx\bmx^{H}}$, $m=1,\ldots,M$, and  $\widetilde\bT=\Ex{}{\setT(\bmy)\setT(\bmy)^T}$. 
\end{prop}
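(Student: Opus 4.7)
The plan is to reduce the minimization to a scalar quadratic in $\beta$: expand the squared Frobenius norm using the trace inner product, take expectation to collapse the objective to a function of $\beta$ alone, and then minimize that quadratic in closed form.

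First I would write
\begin{align*}
\|\bD_\beta - \bmx\bmx^{H}\|_F^2 = \|\bD_\beta\|_F^2 - 2\Re\{\tr(\bD_\beta^H \bmx\bmx^{H})\} + \|\bmx\bmx^{H}\|_F^2
\end{align*}
and substitute $\bD_\beta = \beta\sum_m \setT(y_m)\bma_m\bma_m^H$. The identity $\tr(\bma_m\bma_m^H\bma_{m'}\bma_{m'}^H) = |\bma_m^H\bma_{m'}|^2$ collapses the first term to $|\beta|^2\sum_{m,m'}\setT(y_m)\setT(y_{m'})|\bma_m^H\bma_{m'}|^2$, while $\tr(\bma_m\bma_m^H\bmx\bmx^H) = \bma_m^H\bmx\bmx^H\bma_m$ collapses the cross term to $2\Re(\beta)\sum_m\setT(y_m)\bma_m^H\bmx\bmx^H\bma_m$.

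Next I would take expectation and insert the definitions $\widetilde T_{m,m'}=\Ex{}{\setT(y_m)\setT(y_{m'})}$ and $\widetilde\bV_m = \Ex{}{\setT(y_m)\bmx\bmx^H}$, which reduces the problem to minimizing the scalar quadratic
\begin{align*}
g(\beta) = |\beta|^2\, Q - 2\Re(\beta)\, S + R_{\bmx\bmx^H},
\end{align*}
where $Q = \sum_{m,m'}\widetilde T_{m,m'}|\bma_m^H\bma_{m'}|^2$ and $S = \sum_m \bma_m^H\widetilde\bV_m\bma_m$. A small observation explains the absolute value in \fref{eq:MSESI}: since $\setT(y_m)$ is real-valued and $\bmx\bmx^H$ is Hermitian, each $\widetilde\bV_m$ is Hermitian, so every quadratic form $\bma_m^H\widetilde\bV_m\bma_m$ is real, hence $S\in\reals$ and $|S|^2=S^2$.

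Finally, minimizing $g(\beta)$ over $\beta\in\setH$ is a one-line calculation: writing $\beta=u+\imath v$ for the complex case one sees the optimal imaginary part is $v^\star=0$ and the first-order condition in $u$ gives $\beta^\star = S/Q$, which is real and therefore admissible in both $\setH=\reals$ and $\setH=\complexset$. Substituting back produces $\min_\beta g(\beta) = R_{\bmx\bmx^H} - |S|^2/Q$, which matches \fref{eq:MSESI}. I do not expect a serious obstacle; the only care needed is tracking conjugates in the complex case and invoking the Hermiticity of $\widetilde\bV_m$ so that a single formula covers both $\setH=\reals$ and $\setH=\complexset$.
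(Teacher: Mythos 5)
Your proposal is correct and follows essentially the same route as the paper: expand the squared Frobenius norm into a scalar quadratic in $\beta$ using $\tr(\bma_m\bma_m^H\bma_{m'}\bma_{m'}^H)=|\bma_m^H\bma_{m'}|^2$, introduce $\widetilde\bT$ and $\widetilde\bV_m$, solve the first-order condition for the optimal $\beta$, and substitute back. The only cosmetic difference is that the paper keeps the two conjugate cross terms separate and differentiates with respect to $\beta^*$, whereas you combine them into $2\Re(\beta)S$ via the (correct) observation that each $\widetilde\bV_m$ is Hermitian so $S$ is real.
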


Since the matrix in~\fref{eq:commonspectralinitializer} is a special case of the LSPE matrix in \fref{eq:ansatz}, we have the following simple yet important property:
\begin{align*}
\textit{S-MSE}_\text{LSPE} \leq \textit{S-MSE}_\text{SI}.
\end{align*}
In words, the spectral MSE of the LSPE  cannot be worse than that of a spectral initializer. 
As we will show in \fref{sec:results}, LSPEs are able to outperform spectral initializers on both synthetic and real-world phase retrieval problems given that the same preprocessing function $\setT$ is used.


\section{LSPEs for Phase Retrieval Problems}
\label{sec:phaseretrieval}

The LSPE provides a framework for estimating signal vectors  from the general observation model in \fref{eq:measurementmodel}. To make the concept of LSPEs explicit and to demonstrate their efficacy in practice, we now show two application examples  to phase retrieval in complex-valued systems. The LSPE for real-valued phase retrieval can be found in \fref{app:realvaluedphase}.

\subsection{Phase Retrieval without Preprocessing}
We first focus on the case where the signal vector~$\bmx$ to be estimated and the measurement matrix $\bA$ are both complex-valued. The phaseless measurements $\bmy$, however, remain real-valued. We need the following assumptions.

\begin{asms} \label{asms:asms2}
Let $\setH=\complexset$. 
Assume square absolute measurements $f(z)=|z|^2$ and the identity preprocessing function $\setT(y)=y$. %
Assume that the signal vector $\bmx\in\complexset^N$ is i.i.d.\ circularly-symmetric complex Gaussian with covariance matrix $\bC_{\bmx} = \sigma_x^2 \bI_N$, i.e.,  $\bmx \sim \setC\setN(\bZero_{N\times1}, \sigma_x^2 \bI_N)$.
Assume that the signal noise vector $\bme^z$ is circularly-symmetric complex Gaussian with covariance matrix $\bC_{\bme^z}$, i.e., $\bme^z \sim \setC\setN(\bZero_{M\times 1},\bC_{\bme^z})$, and the measurement noise vector $\bme^y$ is a real-valued Gaussian vector with mean~$\bar\bme^y$ and covariance matrix $\bC_{\bme^y}$, i.e.,  $\bme^y \sim \setN(\bar\bme^y,\bC_{\bme^y})$. Furthermore assume that $\bmx$, $\bme^z$, and $\bme^y$ are independent. 
\end{asms}

Under these assumptions, we can derive the following LSPE which we call LSPE-$\complexset$; the detailed derivations of this spectral estimator are given in \fref{app:phaseinitC}.

\begin{esti}[LSPE-$\complexset$] \label{esti:phaseinitC}
Let \fref{asms:asms2} hold. Then, the spectral estimation matrix is given by
\begin{align} \label{eq:esti2}
\bD_\bmy^\complexset = \bK_\bmx+  \sum_{m=1}^M t_m \bV_m,
\end{align}
where $\bK_\bmx =\sigma_x^2 \bI_N $, 
the vector $\bmt\in\reals^M$ is given by the solution to the linear system $\bT \bmt  = \bmy-\overline\bmy$ with
\begin{align*}
\overline\bmy & =\diag(\bC_{\bmz})+\bar\bme^y \\
\bC_\bmz & = \sigma_x^2 \bA \bA^H+\bC_{\bme^z} \\
\bT &=  \bC_{\bmz} \odot \bC_{\bmz}^* + \bC_{\bme^y} 
\end{align*}
and $\bV_m=  \sigma_x^4 \bma_m  \bma_m^H $, $m=1,\ldots,M$.
The spectral estimate  $\hat\bmx$ is given by the (scaled) leading eigenvector of $\bD_\bmy^\complexset $ in \fref{eq:esti2}. Furthermore, the  S-MSE is given by \fref{thm:MSEofLMSPE}.
\end{esti} 

We emphasize that the spectral estimator matrix in \fref{eq:esti2} resembles that of the conventional spectral initializer matrix~\fref{eq:commonspectralinitializer} with the following key differences. First and foremost,  each outer product contained in $\bV_m= \sigma_x^4 \bma_m  \bma_m^H$ in \fref{esti:phaseinitC} is weighted by $t_m$, which is a function of \emph{all} phaseless measurements in~$\bmy$ and of the covariance matrix~$\bC_\bmx$. In contrast, each outer product in the conventional spectral initializer matrix in~\fref{eq:commonspectralinitializer} is only weighted by the associated measurement $y_m$. This difference enables the LSPE to weight each outer product depending on  correlations in the phaseless measurements caused by structure in the matrix~$\bA$. Second, the spectral estimator matrix includes a mean term~$\bK_\bmx$, which is absent in the spectral initializer matrix. 
As we will show in \fref{sec:results}, for the same preprocessing function $\setT$, \fref{esti:phaseinitC} is able to outperform spectral initializers for systems with structured measurement matrices~$\bA$. For large i.i.d.\ Gaussian measurement matrices, there is no particular correlation structure to exploit and LSPEs perform on par with spectral initializers.

%

\subsection{Phase Retrieval with Exponential Preprocessing}
To demonstrate the flexibility and generality of our framework, we now design an LSPE with an exponential preprocessing function for complex-valued phase retrieval. 
We derive the LSPE under the following assumptions.
\begin{asms} \label{asms:asms3}
Let $\setH=\complexset$. 
Assume square absolute measurements $f(z)=|z|^2$ and the exponential preprocessing function \mbox{$\setT(y)=\exp(-\gamma y )$} with $\gamma>0$, i.e., we consider
\begin{align*}
\setT(\bmy) = \exp\!\left(-\gamma(|\bmz|^2+\bme^y)\right) \quad \text{and} \quad \bmz=\bA\bmx+\bme^z,
\end{align*}
where the exponential function is applied element-wise to vectors.
The remaining assumptions are the same as in \fref{asms:asms2}.
\end{asms}

We now derive the following LSPE called LSPE-Exp; the derivation of this spectral estimator is given in \fref{app:phaseinitExp}.

\begin{esti}[LSPE-Exp] \label{esti:phaseinitExp}
Let \fref{asms:asms3} hold. Then, the spectral estimation matrix is given by
\begin{align} \label{eq:esti3}
\bD_\bmy^\text{\em Exp} = \bK_\bmx+  \sum_{m=1}^M t_m \bV_m,
\end{align}
where $\bK_\bmx =\sigma_x^2 \bI_N $, 
the vector $\bmt\in\reals^M$ is given by the solution to the linear system $\bT \bmt  = \setT(\bmy)-\overline\setT(\bmy)$ with
\begin{align*}
\overline\setT(\bmy) & =   \bmp_\gamma \oslash \bmq_\gamma \\
 \bT  & =   (\bmp_\gamma\bmp_\gamma^T) \odot \!\exp(\gamma^2\bC_{\bme^y}) \!\oslash\! ( \bmq_\gamma\bmq^T_\gamma\!  - \gamma^2 \bC_\bmz\odot \bC_\bmz^*) \\  
& \quad \, - (\bmp_\gamma\bmp_\gamma^T) \oslash   (\bmq_\gamma\bmq_\gamma^T) \\
\bV_m &= - \frac{\gamma\sigma_x^4[\bmp_\gamma]_m}{(\gamma[\bC_\bmz]_{m,m}+1)^2} \bma_m\bma^H_m, \quad m=1,\ldots,M,
\end{align*}
where  we use the following definitions:
\begin{align*}
\bmq_\gamma& =\gamma\diag(\bC_\bmz)+\bOne_{M\times1}  \\ 
\bmp_\gamma&=\exp\!\left(-\gamma \bar\bme^y + \gamma^2\textstyle\frac{1}{2}\diag(\bC_{\bme^y})\right)\\
\bC_\bmz & = \sigma_x^2 \bA \bA^H+\bC_{\bme^z}.
\end{align*}
The spectral estimate  $\hat\bmx$ is given by the (scaled) leading eigenvector of $\bD_\bmy^\text{\em Exp} $ in \fref{eq:esti3}. Furthermore, the  S-MSE of this estimator is given by \fref{thm:MSEofLMSPE}.
\end{esti} 

At first sight, the choice of the exponential preprocessing function used in \fref{esti:phaseinitExp} seems to be arbitrary. We emphasize, however, that this particular function is inspired by the asymptotically-optimal preprocessing function for properly-normalized Gaussian measurement ensembles proposed in \cite{mondelli2017fundamental} which is given by
\begin{align} \label{eq:optimalpreprocessor}
\setT_\text{opt}(y)=\frac{y-1}{y+\sqrt{\delta}-1}.
\end{align} 
As it turns out, we can scale, negate, and shift the exponential preprocessing function $\setT(y)=\exp(-\gamma y)$ to make it take a similar shape as the function in \fref{eq:optimalpreprocessor}.
More concretely, exponential preprocessing as well as $\setT_\text{opt}(y)$ enables one to attenuate the effect of measurements with large magnitude, which is also the idea underlying the class of orthogonal spectral initializers, as proposed in \cite{chen2015phase,wang2016solving,wang2017solving}, that perform well in practice.

\captionsetup[figure]{textfont=it,font=normalsize}
\begin{figure*}[tp]
\centering
\captionsetup[subfigure]{textfont=it,font=small}
\begin{subfigure}[b]{0.665\columnwidth}
\centering 
\includegraphics[width=1\columnwidth]{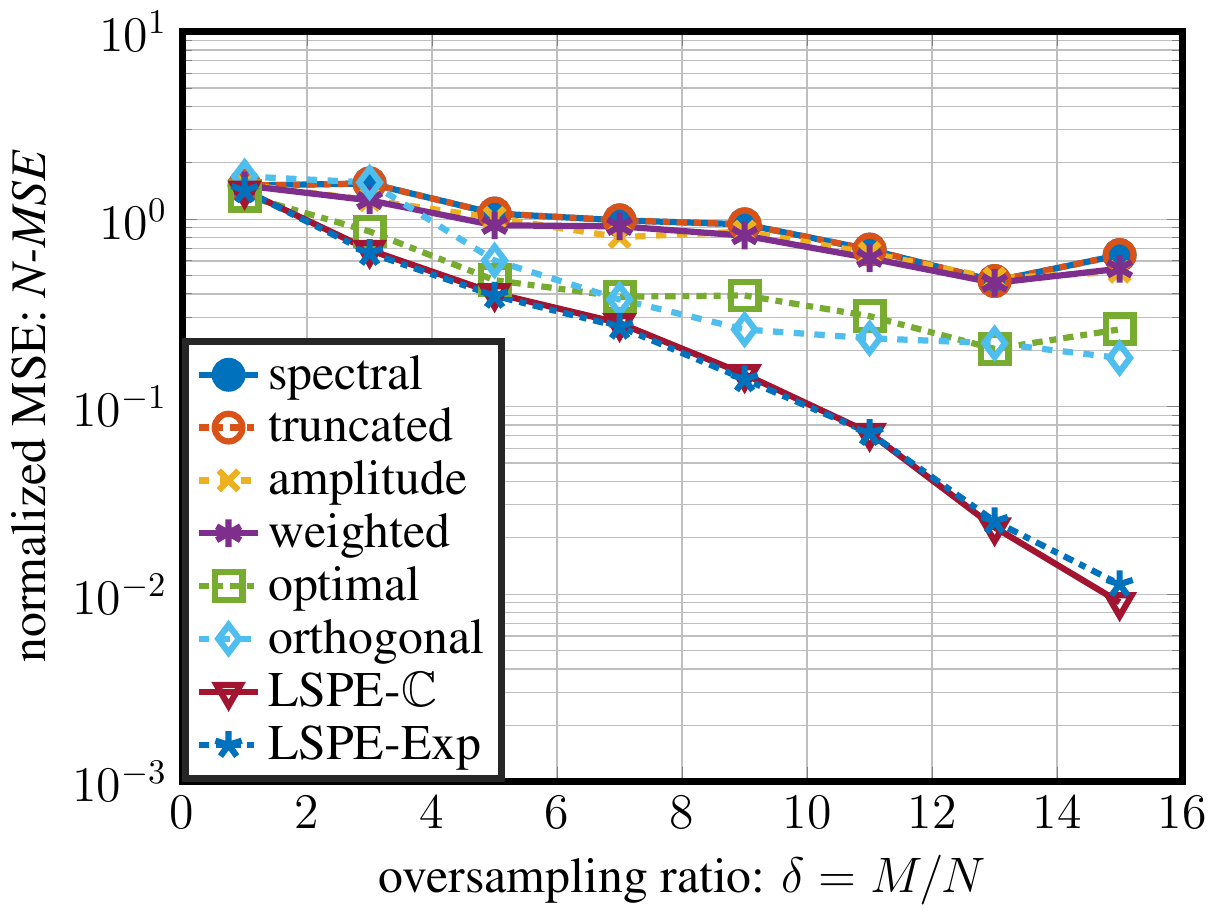}
\caption{Gaussian measurements, $N=16$.}
\label{fig:Gaussian16_MN}
\end{subfigure} 
\hfill
\begin{subfigure}[b]{0.665\columnwidth}
\centering 
\includegraphics[width=1\columnwidth]{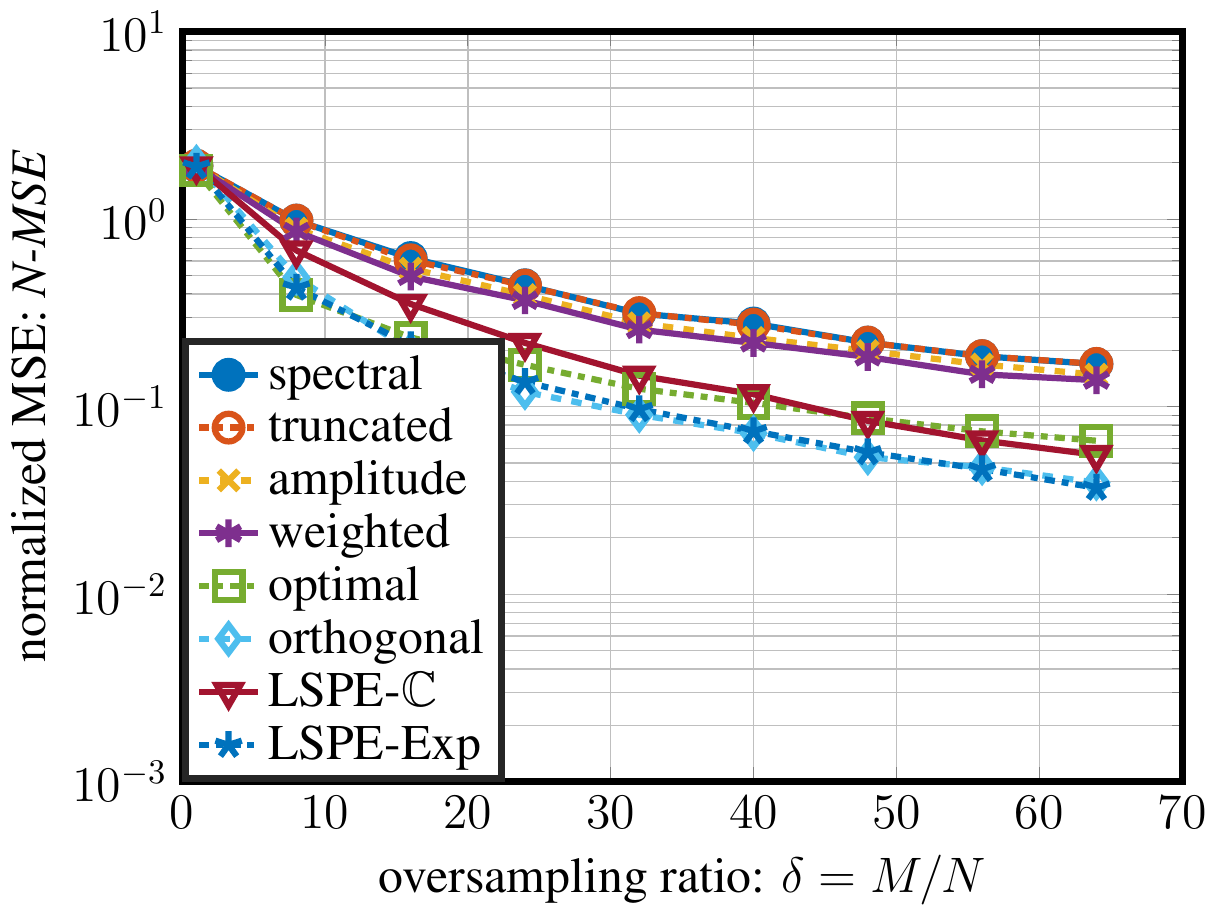}
\caption{Gaussian measurements, $N=256$.}
\label{fig:Gaussian256_MN}
\end{subfigure} 
\hfill
\begin{subfigure}[b]{0.665\columnwidth}
\centering 
\includegraphics[width=1\columnwidth]{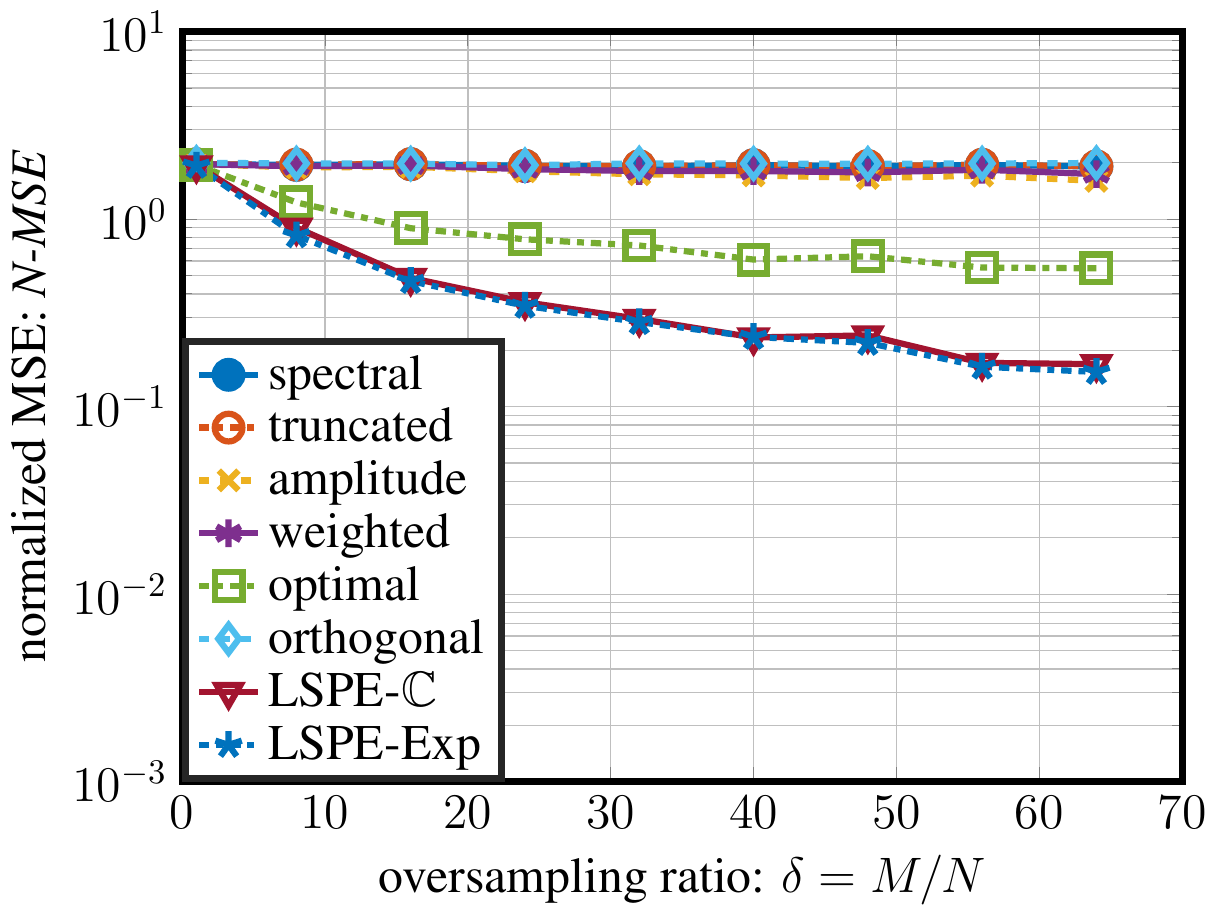}
\caption{Transmission measurements, $N=256$.}
\label{fig:TM256_MN}
\end{subfigure}
\caption{Comparison of normalized MSE (N-MSE) as a function of the oversampling ratio $\delta=M/N$ for complex-valued phase retrieval with different spectral initializers and with different measurement matrices. The proposed LSPEs perform well on low-dimensional problems, for structured measurement ensembles, or at high oversampling ratios~$\delta$.}
\vspace{-0.2cm}
\label{fig:comparison}
\end{figure*}

\section{Numerical Results}
\label{sec:results}
We now compare the performance of our LSPEs against existing spectral initializers proposed for phase retrieval on synthetic and real image data. All our results use the spectral initializers and experimental setups provided by PhasePack~\cite{chandra2017phasepack}.

\subsection{Impact of Measurement Ensemble}
\label{sec:resultsensemble}
We start by comparing the normalized MSE (N-MSE) defined as~\cite{chandra2017phasepack} 
\begin{align*}
\textit{N-MSE} = \frac{\min_{\alpha\in\setH} \|\bmx-\alpha\hat\bmx\|^2}{\|\bmx\|^2}
\end{align*}
for a range of spectral initializers  on different measurement ensembles. Specifically, we focus on the complex-valued case and consider (i) an i.i.d.\ Gaussian measurement matrix with signal dimension $N=16$, (ii) an i.i.d.\ Gaussian measurement matrix with $N=256$, and (iii) the structured ``transmission matrix'' used for image recovery through multiple scattering media as detailed in~\cite{metzler2017coherent}.
We vary the oversampling ratio $\delta=M/N$ and compare the N-MSE of the proposed complex-valued LSPEs, LSPE-$\complexset$ (\fref{esti:phaseinitC}) and LSPE-Exp (\fref{esti:phaseinitExp} with $\gamma=0.001$), to the following spectral initializers:
the original spectral initializer~\cite{netrapalli2013phase,candes2015phase} called ``spectral,''
truncated spectral initializer \cite{chen2015solving} called ``truncated,''
weighted spectral initializer \cite{wang2017solving} called ``weighted,''
amplitude spectral initializer \cite{wang2016solving} called ``amplitude,'' 
orthogonal spectral initializer \cite{chen2015phase} called ``orthogonal,'' 
and the asymptotically-optimal spectral initializer~\cite{mondelli2017fundamental} called ``optimal.''
For the following synthetic experiments, we generate the signals to be recovered according to \fref{asms:asms2} and \fref{asms:asms3} for \mbox{LSPE-$\complexset$} and LSPE-Exp, respectively. 


\fref{fig:Gaussian16_MN} shows that the proposed LSPEs significantly outperform all existing spectral initializers for small problem dimensions with Gaussian measurements; this improvement is even more pronounced for large oversampling ratios. 
The reason is that since we randomly generate a low-dimensional sensing matrix, the system will exhibit strong correlations among the measurements that can be exploited by LSPEs.
For larger dimensions with Gaussian measurements, we see in \fref{fig:Gaussian256_MN} that the proposed LSPEs do not provide an advantage over other methods. In fact, only LSPE-Exp  is able to perform as well as the orthogonal spectral initializer, which achieves the best performance in this scenario.
This behavior can be attributed to the facts that (i) for large random matrices there is no particular correlation structure among the measurements to exploit and (ii) ignoring measurements associated to large values in $y_m$ is increasingly important. 
For structured measurements, as it is the case for the transmission matrix from~\cite{metzler2017coherent}, we see in \fref{fig:TM256_MN} that LSPEs significantly outperform existing methods that are designed for random measurement ensembles.
In this scenario, exponential preprocessing does not improve performance since correlations in the transmission matrix are dominating the performance.

\subsection{S-MSE Expressions and Approximation Error}
We now validate our theoretical S-MSE expressions in \fref{thm:MSEofLMSPE} and \fref{prop:SIapproxerror}, and confirm the accuracy of the EER bound given in \fref{cor:LMSPEapproxerror}. 
In the following experiment, we set $M=8N$ and vary the dimension~$N$ from $8$ to $64$. 
For each pair $(M,N)$, we randomly generate one instance of an i.i.d.\ circularly symmetric complex Gaussian measurement matrix and average the different errors (S-MSE and EER) over $10,000$ Monte-Carlo trials. We consider a noiseless setting and  assume identity preprocessing, i.e., $\setT(y)=y$. The signal vectors are generated according to an i.i.d.\  circularly complex Gaussian random vector.
From \fref{fig:comparison}, we see that our analytical S-MSE expressions for the LSPE-$\complexset$ and spectral initializers match their empirical values. We furthermore see that the empirical EER is only about 6\,dB to 10\,dB lower than our non-asymptotic upper bound given in \fref{cor:LMSPEapproxerror}. 

\setlength{\textfloatsep}{15pt}
\captionsetup[figure]{textfont=it,font=normalsize}
\begin{figure}[t]
\centering
\includegraphics[width=0.8\columnwidth]{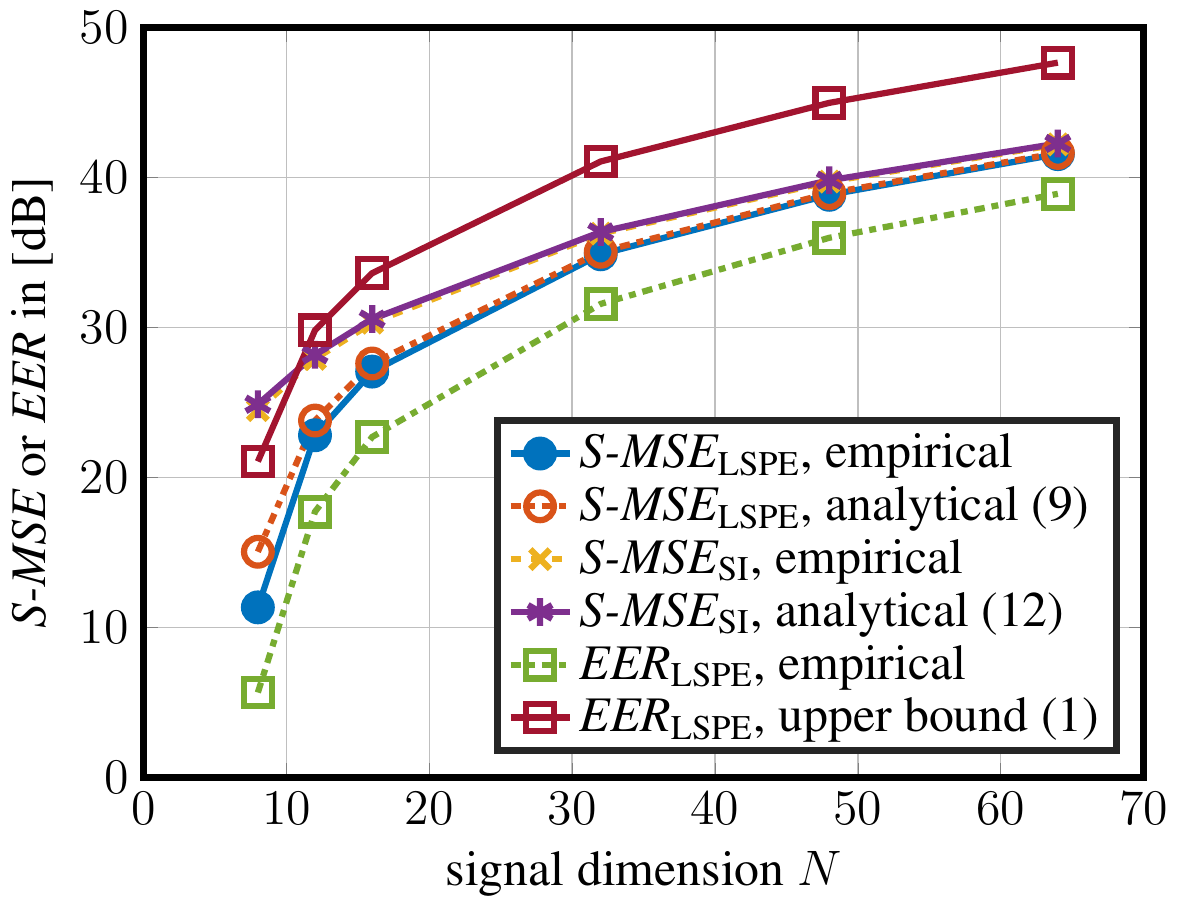}
\caption{Comparison of the analytical and empirical spectral MSE (S-MSE) and estimation error (EER) for LSPEs and spectral initializers (SI) at oversampling ratio $\delta=8$. Our analytical expressions in \fref{thm:MSEofLMSPE} and \fref{prop:SIapproxerror} match the empirical S-MSE; the upper bound in \fref{cor:LMSPEapproxerror} accurately characterizes the empirical EER.}
\vspace{-0.2cm}
\label{fig:comparison}
\end{figure}

\subsection{Real-World Image Recovery}
We finally illustrate the efficacy of LSPEs in a more realistic scenario. In particular, we show results for a real image reconstruction task by using LSPEs and spectral initializers only, i.e., we are \emph{not} using any additional phase retrieval algorithm. 
Our goal is to recover a $16\times16$-pixel and a $40\times40$-pixel image that was captured through a multiple scattering media using the deterministic and highly-structured transmission matrix as detailed in~\cite{metzler2017coherent}.
We compare the proposed LSPEs to the same set of spectral initializers as in \fref{sec:resultsensemble}.
The signal priors are as in \fref{asms:asms2} (LSPE-$\complexset$) and \fref{asms:asms3} (LSPE-Exp).

Figures \ref{fig:TM16} and \ref{fig:TM40} show the recovered images along with the N-MSE values. The proposed LSPEs (often significantly) outperform all spectral initializers in terms of visual quality as well as the N-MSE. This result confirms the observations made in \fref{fig:TM256_MN} that LSPEs outperform existing spectral initializers for structured measurement matrices.
We note that exponential preprocessing for LSPEs does not noticeably improve the N-MSE (over LSPE-$\mathbb{C}$) in this setting since correlations in the transmission measurement matrix are dominating the recovery performance.

\newcommand{\figurescale}{0.6}


\captionsetup[figure]{textfont=it,font=normalsize}
\begin{figure}[t]
\centering
\captionsetup[sub]{textfont=it,font=footnotesize,justification=centering}
\begin{subfigure}[b]{0.32\columnwidth}
\centering 
\includegraphics[width=\figurescale\columnwidth]{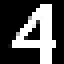}
\caption{original \phantom{asasdasdfasdf}}
\end{subfigure}
\begin{subfigure}[b]{0.32\columnwidth}
\centering 
\includegraphics[width=\figurescale\columnwidth]{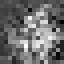}
\caption{amplitude \mbox{$\textit{N-MSE}=0.4927$}}
\end{subfigure} 
\begin{subfigure}[b]{0.32\columnwidth}
\centering 
\includegraphics[width=\figurescale\columnwidth]{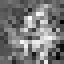}
\caption{optimal \mbox{$\textit{N-MSE}=0.4833$}}
\end{subfigure} \\[0.3cm]
%
%
\begin{subfigure}[b]{0.32\columnwidth}
\centering 
\includegraphics[width=\figurescale\columnwidth]{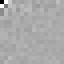}
\caption{orthogonal \mbox{$\textit{N-MSE}=0.6850$}}
\end{subfigure}
\begin{subfigure}[b]{0.32\columnwidth}
\centering 
\includegraphics[width=\figurescale\columnwidth]{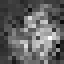}
\caption{spectral \mbox{$\textit{N-MSE}=0.4764 $}}
\end{subfigure} 
\begin{subfigure}[b]{0.32\columnwidth}
\centering 
\includegraphics[width=\figurescale\columnwidth]{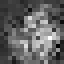}
\caption{truncated \mbox{$\textit{N-MSE}=0.4764$}}
\end{subfigure}  \\[0.3cm]
%
%
\begin{subfigure}[b]{0.32\columnwidth}
\centering 
\includegraphics[width=\figurescale\columnwidth]{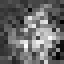}
\caption{weighted  \mbox{$\textit{N-MSE}=0.4797$}}
\end{subfigure}
\begin{subfigure}[b]{0.32\columnwidth}
\centering 
\includegraphics[width=\figurescale\columnwidth]{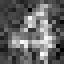}
\caption{LSPE-$\mathbb{C}$ \mbox{$\textit{N-MSE}=0.3377$}}
\end{subfigure} 
\begin{subfigure}[b]{0.32\columnwidth}
\centering 
\includegraphics[width=\figurescale\columnwidth]{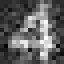}
\caption{LSPE-Exp \mbox{$\textit{N-MSE}=0.2928$}}
\end{subfigure}
\caption{Recovery of a  $16 \times 16$ image from with $ M=5  N$  measurements captured through a scattering medium without the use of a phase retrieval algorithm. LSPEs outperform all initializers for structured measurements.}
\label{fig:TM16}
\end{figure}


\captionsetup[figure]{textfont=it,font=normalsize}
\begin{figure}[t]
\centering
\captionsetup[sub]{textfont=it,font=footnotesize,justification=centering}
\begin{subfigure}[b]{0.32\columnwidth}
\centering 
\includegraphics[width=\figurescale\columnwidth]{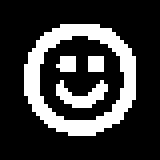}
\caption{original \phantom{asasdasdfasdf}}
\end{subfigure}
\begin{subfigure}[b]{0.32\columnwidth}
\centering 
\includegraphics[width=\figurescale\columnwidth]{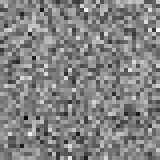}
\caption{amplitude \mbox{$\textit{N-MSE}=0.7010$}}
\end{subfigure} 
\begin{subfigure}[b]{0.32\columnwidth}
\centering 
\includegraphics[width=\figurescale\columnwidth]{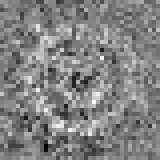}
\caption{optimal \mbox{$\textit{N-MSE}=0.5849 $}}
\end{subfigure} \\[0.3cm]
%
%
\begin{subfigure}[b]{0.32\columnwidth}
\centering 
\includegraphics[width=\figurescale\columnwidth]{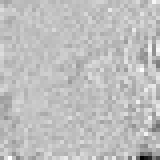}
\caption{orthogonal \mbox{$\textit{N-MSE}=0.7028$}}
\end{subfigure}
\begin{subfigure}[b]{0.32\columnwidth}
\centering 
\includegraphics[width=\figurescale\columnwidth]{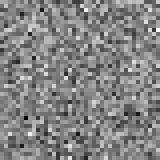}
\caption{spectral \mbox{$\textit{N-MSE}=0.7016$}}
\end{subfigure} 
\begin{subfigure}[b]{0.32\columnwidth}
\centering 
\includegraphics[width=\figurescale\columnwidth]{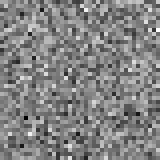}
\caption{truncated \mbox{$\textit{N-MSE}= 0.7020$}}
\end{subfigure}  \\[0.3cm]
%
%
\begin{subfigure}[b]{0.32\columnwidth}
\centering 
\includegraphics[width=\figurescale\columnwidth]{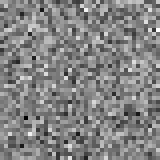}
\caption{weighted  \mbox{$\textit{N-MSE}=0.7013$}}
\end{subfigure}
\begin{subfigure}[b]{0.32\columnwidth}
\centering 
\includegraphics[width=\figurescale\columnwidth]{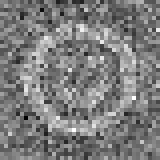}
\caption{LSPE-$\mathbb{C}$ \mbox{$\textit{N-MSE}=0.4920$}}
\end{subfigure} 
\begin{subfigure}[b]{0.32\columnwidth}
\centering 
\includegraphics[width=\figurescale\columnwidth]{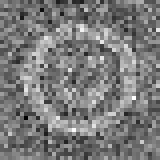}
\caption{LSPE-Exp \mbox{$\textit{N-MSE}=0.4896$}}
\end{subfigure}
\caption{Recovery of a  $40 \times 40$ image from with $ M=10  N$  measurements captured through a scattering medium without the use of a phase retrieval algorithm. LSPEs outperform all initializers for structured measurements.}
\label{fig:TM40}
\end{figure}


\section{Conclusions}
\label{sec:conclusions}

We have proposed a novel class of estimators, called \emph{linear spectral estimators}~(LSPEs), which are suitable for the recovery of signals from general nonlinear measurement systems.
We have developed nonasymptotic and deterministic performance guarantees for LSPEs that provide accurate bounds on the  estimation error, especially for structured or low-dimensional measurement systems.
To demonstrate the efficacy of LSPEs in practice, we have applied them to complex-valued phase retrieval problems, in which LSPEs can be used to compute accurate signal estimates or initialization vectors for other convex or nonconvex phase retrieval algorithms.
We have shown that properly preprocessing the nonlinear measurements can further improve the performance of LSPEs in practical scenarios.
Our simulations with synthetic and real data have shown that LSPEs are able to significantly outperform existing spectral initializers, especially for low-dimensional problems, for structured measurement matrices, or for large oversampling ratios.

There are many avenues for future work.
First, one could derive LSPEs for the asymptotically-optimal preprocessing function in \fref{eq:optimalpreprocessor} or for other commonly used functions, which may lead to further performance improvements. 
Second, the proposed error analysis could be used to generate improved measurement matrices. 
Third, an exploration of LSPEs for other nonlinearities that arise in machine learning and signal processing applications is left for future work.


\section*{Acknowledgments}
R. Ghods and C.~Studer were supported in part by Xilinx, Inc.\ and by the US National Science Foundation (NSF) under grants ECCS-1408006, EECS-1740286, CCF-1535897,  CCF-1652065, and CNS-1717559.
T. Goldstein was supported by the US NSF under grant CCF-1535902, the US ONR under grant N00014-15-1-2676, the DARPA Lifelong Learning Machines program, and the Sloan Foundation. 


\appendix
%


\section{Proof of \fref{thm:LMSPE}}
\label{app:LMSPE}

The proof proceeds in two steps detailed as follows.

\paragraph{Mean Matrix}
We first compute the mean matrix~$\bW_0$. Since \fref{eq:optimalestimator} is a quadratic form, we can take the derivative in~$\widetilde\bW_0^H$ and set it to zero, i.e.,  
\begin{align*}
\frac{d}{d\widetilde\bW_0^H}\Ex{}{\left\| \widetilde\bW_0 + \sum_{m=1}^M \setT(y_m) \widetilde\bW_m-\bmx\bmx^{H}\right\|_F^2} \!= 0.
\end{align*}
Basic matrix calculus yields 
\begin{align} \label{eq:optimalmean}
\widetilde{\bW}_0 =  \textstyle \bK_{\bmx} - \sum_{m=1}^M  \overline\setT(y_m) \widetilde\bW_m
\end{align}
with $\overline\setT(y_m)=\Ex{}{\setT(y_m)}$ and $\bK_{\bmx}=\Ex{}{\bmx\bmx^{H}}$.

\paragraph{Linear Estimation Matrix}
With \fref{eq:optimalmean} and the fact that~\fref{eq:optimalestimator} is a quadratic form in the matrices $\bW_m$, $m=1,\ldots,M$, we take the derivatives in~$\bW_m^H$ and setting them to zero:
\begin{align*}
 {\textstyle  \frac{d}{d\widetilde\bW_m^H}}\mathbb{E} \! \Bigg[\bigg\|\!\!\sum_{m=1}^M\! (\setT(y_m)\!-\!\overline\setT(y_m)) \widetilde\bW_m \!
 \!-\! ( \bmx\bmx^{H}\! \!\!-\! \bK_{\bmx}) \bigg\|_F^2\! \Bigg] \!\!= \!0.
\end{align*}
By interchanging the derivative with expectation and with basic manipulations, we obtain the following set of optimality conditions for $\bW_m$ for $m=1,\dots,M$:
\begin{align} \label{eq:conditions}
& \textstyle \sum_{m'=1}^M \widetilde\bW_{m'} \Ex{}{(\setT(y_m)-\overline\setT(y_m))(\setT(y_{m'})-\overline\setT(y_{m'}))} \notag  \\
& \qquad \quad = \Ex{}{(\setT(y_m)-\overline\setT(y_m))(\bmx\bmx^{H}-\bK_{\bmx})}\!.
\end{align}
In compact matrix form, the above condition reads
\begin{align} \label{eq:matrixcondition}
(\bT\kron\bI_{N\times N}) \underline\bW = \underline\bV,
\end{align}
where we used the following shortcuts: 
\begin{align*}
\bT & = \Ex{}{(\setT(\bmy)-\overline\setT(\bmy))(\setT(\bmy)-\overline\setT(\bmy))^T} \\
\underline\bW & = [ \widetilde\bW_1^T, \ldots, \widetilde\bW_m^T, \ldots, \widetilde\bW_M^T ]^T \\
\bV_m \!& = \Ex{}{(\setT(y_m)\!-\!\overline\setT(y_m))(\bmx\bmx^{H}\!\!-\bK_{\bmx})}\!, m=1,\ldots,M \\
\underline\bV & = [ \bV_1^T, \ldots, \bV_m^T, \ldots, \bV_M^T ]^T.
\end{align*}
The condition in \fref{eq:matrixcondition} can be solved for the estimation matrices in $\underline\bW$ leading to 
$\underline\bW = (\bT^{-1}\kron\bI_{N\times N})  \underline\bV$,
where we require the matrix $\bT$ to be full rank.
To obtain the linear spectral estimator matrix, we simplify as 
\begin{align*}
\bD_\bmy & = \bK_{\bmx} + ((\setT(\bmy)-\overline\setT(\bmy))^T \kron \bI_{N\times N}) \underline\bW \\
& = \textstyle  \bK_{\bmx} + \sum_{m=1}^M t_m \bV_m, \notag
\end{align*}
where we define the vector $ \bmt = \bT^{-1}(\setT(\bmy)-\overline\setT(\bmy))$.

\section{Proof of \fref{thm:MSEofLMSPE}}
\label{app:MSEofLMSPE}
To compute the spectral MSE in \fref{eq:MSE}, we simplify  
\begin{align*}  
\textit{S-MSE} & =\textstyle  \Ex{}{\left\|  \sum_{m=1}^M t_m \bV_m-(\bmx\bmx^{H}-\bK_{\bmx})\right\|_F^2}\!.
\end{align*}
We expand this expression into four terms
\begin{align}
& \textstyle \Ex{}{ \left\|  \sum_{m=1}^M t_m \bV_m-(\bmx\bmx^{H}-\bK_{\bmx})\right\|_F^2} \notag \\
& = \textstyle  \Ex{}{\left\|  \sum_{m=1}^M t_m \bV_m\right\|_F^2} \label{eq:proofterm1} \\
& \textstyle \quad +  \Ex{}{\left\|  \bmx\bmx^{H}-\bK_{\bmx}\right\|_F^2} \notag  \\
& \textstyle \quad - \Ex{}{ \tr\!\left( (\bmx\bmx^{H}-\bK_{\bmx})^H\left(\sum_{m=1}^M t_m \bV_m\right) \right) }  \label{eq:proofterm2} \\
&  \textstyle \quad - \Ex{}{ \tr\!\left( \left(\sum_{m=1}^M t_m \bV_m\right)^H(\bmx\bmx^{H}-\bK_{\bmx})  \right) } \label{eq:proofterm3}
\end{align}
and simplify each expression individually. We start with~\fref{eq:proofterm1} and use the fact that
\begin{align*}
\textstyle \sum_{m=1}^M t_m \bV_m = ((\setT(\bmy)-\overline\setT(\bmy))^T \bT^{-1}\kron\bI_{N\times N})  \underline\bV
\end{align*}
and rewrite the quantity within expectation as follows:
\begin{align*}
&  \underline\bV^H ( \bT^{-1} (\setT(\bmy)-\overline\setT(\bmy)) \kron\bI_{N\times N}) \\
& \quad \times   ((\setT(\bmy)-\overline\setT(\bmy))^T \bT^{-1}\kron\bI_{N\times N})  \underline\bV \\
 & = \underline\bV^H (( \bT^{-1} (\setT(\bmy)-\overline\setT(\bmy)) \\
 & \quad \times  (\setT(\bmy)-\overline\setT(\bmy))^T \bT^{-1}) \kron\bI_{N\times N}) \underline\bV.
\end{align*}
We now evaluate the expectation which leads to
\begin{align*}
\textstyle \Ex{}{\left\|  \sum_{m=1}^M t_m \bV_m\right\|_F^2} = \tr\!\left( \underline\bV^H ( \bT^{-1} \kron\bI_{N\times N}) \underline\bV \right)
\end{align*}
or, equivalently, to 
\begin{align*}  
\textstyle \Ex{}{\left\|  \sum_{m=1}^M \!t_m \! \bV_m\right\|_F^2} \!\!=\displaystyle  \!\!\sum_{m=1}^M\! \sum_{m'=1}^M [ \bT^{-1}]_{m,m'} \tr\!\left( \bV^H_m \bV_{m'} \right)\!.
\end{align*}
We next will simplify \fref{eq:proofterm2}. Recall that
\begin{align*}
\textstyle t_m = \sum_{m'=1}^M [ \bT^{-1}]_{m,m'} (\setT(y_{m'})-\overline\setT(y_{m'})),
\end{align*}
which enables us to write \fref{eq:proofterm2} as
\begin{align}
& \textstyle \Ex{}{ \tr\!\left( (\bmx\bmx^{H}-\bK_{\bmx})^H\left(\sum_{m=1}^M t_m \bV_m\right) \right) } \notag \\ 
%
&\textstyle =  \sum_{m=1}^M \sum_{{m'}=1}^M [ \bT^{-1}]_{m,{m'}}  \notag\\
&\textstyle \quad \times \tr\!\left(  \Ex{}{(\bmx\bmx^{H}-\bK_{\bmx})^H(\setT(y_{m'})-\overline\setT(y_{m'}))} \bV_m \right) \notag  \\
& \textstyle=  \sum_{m=1}^M \sum_{{m'}=1}^M [ \bT^{-1}]_{m,{m'}}    \tr\!\left( \bV_{m'}^H \bV_m \right)\!. \label{eq:proofcombine2}
\end{align}
Seeing as \fref{eq:proofterm3} is the Hermitian conjugate of \fref{eq:proofterm2}, we have
\begin{align}
& \textstyle \Ex{}{ \tr\!\left( \left(\sum_{m=1}^M t_m \bV_m\right)^H(\bmx\bmx^{H}-\bK_{\bmx})  \right) }\notag \\
%
%
& \textstyle = \sum_{m=1}^M \sum_{{m'}=1}^M [ \bT^{-1}]^*_{m,{m'}} \tr\!\left(  \bV_m^H  \bV_{m'} \right)\!.  \label{eq:proofcombine3}
\end{align}

Combining  all these terms yield the spectral MSE 
\begin{align}
  \textit{S-MSE} & =  \textstyle\Ex{}{ \left\|  \sum_{m=1}^M t_m \bV_m-(\bmx\bmx^{H}-\bK_{\bmx})\right\|_F^2} \notag \\
& \textstyle=  \bC_{\bmx\bmx^{H}}-  \tr\!\left( \underline\bV^H ( \bT^{-1} \kron\bI_{N\times N}) \underline\bV \right)\!. \notag
\end{align}
with $\bC_{\bmx\bmx^{H}}=\Ex{}{\left\|  \bmx\bmx^{H}-\bK_{\bmx}\right\|_F^2}$.

\section{Proof of \fref{cor:LMSPEapproxerror}}
\label{app:LMSPEapproxerror}
We bound the estimation error with the spectral MSE of the LSPE as follows. For a given instance, we have
\begin{align}
&  \| \hat\bmx\hat\bmx^H- \bmx\bmx^{H} \|_F^2   =   \| \hat\bmx\hat\bmx^H- \bD_\bmy +\bD_\bmy - \bmx\bmx^{H} \|_F^2 \notag  \\
 &  \quad  \overset{\text{(a)}}{\leq}   2  \| \hat\bmx\hat\bmx^H- \bD_\bmy\|_F^2 + 2\| \bD_\bmy - \bmx\bmx^{H} \|_F^2 \notag \\
 &  \quad \overset{\text{(b)}}{\leq} 4 \| \bD_\bmy - \bmx\bmx^{H} \|_F^2, \notag
\end{align}
where (a) follows from the squared triangle inequality and (b) because $\hat\bmx \hat\bmx^H$ is the best rank-1 approximation of $\bD_\bmy$. Averaging over all instances finally yields
\begin{align*}
 \Ex{}{ \| \hat\bmx\hat\bmx^H- \bmx\bmx^{H} \|_F^2 } \leq 4\,\textit{S-MSE}_\text{LSPE}. 
\end{align*}
\section{Proof of \fref{prop:SIapproxerror}}
\label{app:SIapproxerror}

Our goal is to first evaluate the S-MSE of the unnormalized spectral initializer in \fref{eq:commonspectralinitializer}
\begin{align*}
\textit{US-MSE}_\text{SI} = \Ex{}{ \left\| \beta \sum_{m=1}^M \setT(y_m) \bma_m\bma_m^H - \bmx\bmx^{H}\right\|_F^2 } 
\end{align*}
and then minimize the resulting expression over the parameter $\beta$. The unnormalized spectral MSE 
can be expanded into the following form:
\begin{align*}
& |\beta|^2 \sum_{m=1}^M\sum_{m'=1}^M \Ex{}{ \setT(y_m) \setT(y_{m'}) } \tr(\bma_m\bma_m^H \bma_{m'}\bma_{m'}^H)  \\
& -   \beta^* \sum_{m=1}^M \tr\left( \bma_m\bma_m^H  \Ex{}{  \setT(y_m) \bmx\bmx^{H} } \right) \\
& -    \beta \sum_{m=1}^M \tr\left( \Ex{}{ \bmx\bmx^{H} \setT(y_m)} \bma_m\bma_m^H  \right)   \\
& + \Ex{}{\|\bmx\bmx^{H}\|^2_F}.
\end{align*}
By using the definitions 
\begin{align*}
\widetilde\bV_m & =\Ex{}{\setT(y_m)\bmx\bmx^{H}}, m=1,\ldots,M, \\
\widetilde\bT& =\Ex{}{\setT(\bmy)\setT(\bmy)^T},
\end{align*}
we can simplify the above expression into
 \begin{align}
& |\beta|^2 \sum_{m=1}^M\sum_{m'=1}^M \widetilde T_{m,m'} |\bma_m^H \bma_{m'}|^2  + \Ex{}{\|\bmx\bmx^{H}\|^2_F} \notag \\
&  -   \beta^* \sum_{m=1}^M \!\tr\left( \bma_m\bma_m^H  \widetilde\bV_m \right) \!- \beta\! \sum_{m=1}^M \!\tr\left( \widetilde\bV_m^H \bma_m\bma_m^H  \right)\!.  \label{eq:proofsimplifiedMSESI}
\end{align}
We can now find the optimal parameter for $\beta$ by taking the derivative with respect to $\beta^*$ and setting the expression to zero. The resulting optimal scaling parameter is given by
 \begin{align*}
 \hat\beta   =  \frac{\sum_{m=1}^M \tr\left( \bma_m\bma_m^H  \widetilde\bV_m \right)}{\sum_{m=1}^M\sum_{m'=1}^M \widetilde T_{m,m'} |\bma_m^H \bma_{m'}|^2}.
\end{align*}
We now plug in $\hat\beta$ into the expression \fref{eq:proofsimplifiedMSESI}, which yields
 \begin{align}
& \textit{S-MSE}_\text{SI} =  \left|\frac{\sum_{m=1}^M \tr\!\left( \bma_m\bma_m^H  \widetilde\bV_m \right)}{\sum_{m=1}^M\sum_{m'=1}^M \widetilde T_{m,m'} |\bma_m^H \bma_{m'}|^2} \right|^2 \\
&\quad \times  \sum_{m=1}^M\sum_{m'=1}^M \widetilde T_{m,m'} |\bma_m^H \bma_{m'}|^2  \notag \\
& -   \frac{\sum_{m=1}^M \tr\!\left( \widetilde\bV_m^H\bma_m\bma_m^H   \right)}{\sum_{m=1}^M\sum_{m'=1}^M \widetilde T^*_{m,m'} |\bma_m^H \bma_{m'}|^2} \sum_{m=1}^M \tr\left( \bma_m\bma_m^H  \widetilde\bV_m \right) \notag \\
& -   \frac{\sum_{m=1}^M \tr\!\left( \bma_m\bma_m^H  \widetilde\bV_m \right)}{\sum_{m=1}^M\sum_{m'=1}^M \widetilde T_{m,m'} |\bma_m^H \bma_{m'}|^2}  \sum_{m=1}^M \tr\left( \widetilde\bV_m^H \bma_m\bma_m^H  \right)  \notag \\
& + \Ex{}{\|\bmx\bmx^{H}\|^2_F}.  \notag
\end{align}
This expression can be simplified further to obtain:
 \begin{align*}
\textit{S-MSE}_\text{SI} = & \frac{ \left| \sum_{m=1}^M \tr\left( \bma_m\bma_m^H  \widetilde\bV_m \right) \right|^2 }{\sum_{m=1}^M\sum_{m'=1}^M \widetilde T^*_{m,m'} |\bma_m^H \bma_{m'}|^2}    \notag \\
& -   \frac{ \left| \sum_{m=1}^M \tr\left( \bma_m\bma_m^H  \widetilde\bV_m \right) \right|^2 }{\sum_{m=1}^M\sum_{m'=1}^M \widetilde T^*_{m,m'} |\bma_m^H \bma_{m'}|^2}  \notag \\
& -   \frac{\left| \sum_{m=1}^M \tr\left( \bma_m\bma_m^H  \widetilde\bV_m \right) \right|^2 }{\sum_{m=1}^M\sum_{m'=1}^M \widetilde T_{m,m'} |\bma_m^H \bma_{m'}|^2}   \notag \\
& + \Ex{}{\|\bmx\bmx^{H}\|^2_F},  \notag \\
& = R_{\bmx\bmx^H}\! - \frac{\left| \sum_{m=1}^M \bma_m^H \widetilde{\bV}_m \bma_m \right|^2}{ \sum_{m=1}^M\sum_{m'=1}^M \widetilde{T}_{m,m'}|\bma_m^H\bma_{m'}|^2},
\end{align*}
which is what we wanted to show in \fref{eq:MSESI}.

\section{Real-Valued Phase Retrieval}
\label{app:realvaluedphase}
We now focus on the case where the signal vector $\bmx$ to be recovered and the measurement matrix $\bA$ are both real-valued. We derive the LSPE by using the following assumptions, which are reasonable for phase retrieval problems.

\begin{asms} \label{asms:asms1}
Let $\setH=\reals$. 
Assume square measurements $f(z)=z^2$ and the identity preprocessing function \mbox{$\setT(y)=y$}. 
Assume that the signal vector $\bmx\in\reals^N$ is i.i.d.\ zero-mean Gaussian distributed with covariance matrix $\bC_{\bmx} = \sigma_x^2 \bI_N$, i.e.,  $\bmx \sim \setN(\bZero_{N\times1}, \sigma_x^2 \bI_N)$; the parameter~$\sigma_x^2$ denotes the signal variance.
Assume that the signal noise vector $\bme^z$ is zero-mean Gaussian with covariance matrix $\bC_{\bme^z}$, i.e., $\bme^z \sim \setN(\bZero_{M\times 1},\bC_{\bme^z})$, and the measurement noise vector $\bme^y$ is Gaussian with mean~$\bar\bme^y$ and covariance matrix $\bC_{\bme^y}$, i.e.,  $\bme^y \sim \setN(\bar\bme^y,\bC_{\bme^y})$. Furthermore assume that $\bmx$, $\bme^z$, and $\bme^y$ are independent. 
\end{asms}

Under these assumptions, we can derive the following LSPE which we call LSPE-$\reals$; the detailed derivations of this spectral estimator are given in \fref{app:phaseinitR}.

\begin{esti}[LSPE-$\reals$] \label{esti:phaseinitR}
Let \fref{asms:asms1} hold. Then, the spectral estimation matrix is given by
\begin{align} \label{eq:esti1}
\bD_\bmy^\reals = \bK_\bmx+  \sum_{m=1}^M t_m \bV_m,
\end{align}
where $\bK_\bmx =\sigma_x^2 \bI_N $, 
the vector $\bmt\in\reals^M$ is given by the solution to the linear system $\bT \bmt  = \bmy-\overline\bmy$ with
\begin{align*}
\overline\bmy & =\diag(\bC_{\bmz})+\bar\bme^y \\
\bC_\bmz & = \sigma_x^2 \bA \bA^T+\bC_{\bme^z} \\
\bT &= 2 \bC_{\bmz} \odot \bC_{\bmz}+\bC_{\bme^y}
\end{align*}
and $\bV_m=  2 \sigma_x^4 \bma_m  \bma_m^T $, $m=1,\ldots,M$.
The spectral estimate  $\hat\bmx$ is given by the (scaled) leading eigenvector of $\bD_\bmy^\reals $ in \fref{eq:esti1}. Furthermore, the S-MSE is given by \fref{thm:MSEofLMSPE}.
\end{esti}

\section{Derivation of \fref{esti:phaseinitR}}
\label{app:phaseinitR}

We now use  \fref{thm:LMSPE} to derive \fref{esti:phaseinitR} under \fref{asms:asms1}. To this end, we require the three quantities: $\overline\setT(\bmy)$, $\bT$, and~$\bV_m$, $m=1,\ldots,M$, which we derive separately. 

\paragraph{Computing $\overline\setT(\bmy)$}
To compute the real-valued vector
\begin{align} \label{eq:quantityTbar}
\overline\setT(\bmy)=\Ex{}{\setT(\bmy)}, 
\end{align}
we need the following result on the bivariate folded normal distribution developed in \citep[Sec.~3.1]{foldedNormalMoments}.
\begin{lem}\label{lem:foldedvar}
	Let  $[u_1,u_2] \sim \setN(\bm\mu,\bSigma)$ be a pair of real-valued jointly Gaussian random variables with covariance matrix 
	\begin{align*}	
	\bSigma=\left[\begin{array}{ll}\sigma_1^2 & \sigma_{1,2}^2\\\sigma_{1,2}^2&\sigma_2^2\end{array}\right]\!.
	\end{align*}
	Then, for $m=1,2$, the pair of random variables $(\nu_1, \nu_2)$ with $\nu_1=u_1^2$ and $\nu_2=u_2^2$ follows the bivariate folded normal distribution with the following (centered) moments: 
	\begin{align*}
	\bar{\nu}_m&=\Ex{}{u_m^2}=\sigma_m^2+\mu_m^2\\
	[\bC_{\bm\nu}]_{1,2}&=\Ex{}{(\nu_{1}-\bar{\nu_{1}})(\nu_{2}-\bar{\nu_{2}})} \\ 
	&  =4\mu_{1}\mu_{2}\sigma_{1,2}^2+2\sigma_{1,2}^4\\
	[\bC_{\bm\nu}]_{1,1}&=\Ex{}{(\nu_1-\bar{\nu_1})^2}=2 \sigma_1^4+4\mu_1^2\sigma_1^2.
	\end{align*}
\end{lem}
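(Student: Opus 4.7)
The plan is to reduce all three identities to standard moment computations for centered jointly Gaussian random variables via the decomposition $u_m = \mu_m + w_m$, where $[w_1,w_2]$ is zero-mean jointly Gaussian with the same covariance matrix $\bSigma$. Squaring gives $\nu_m = \mu_m^2 + 2\mu_m w_m + w_m^2$, so the mean $\bar\nu_m = \mu_m^2 + \Ex{}{w_m^2} = \mu_m^2 + \sigma_m^2$ drops out immediately, and the centered variable becomes $\nu_m - \bar\nu_m = 2\mu_m w_m + (w_m^2 - \sigma_m^2)$.

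Next, I would compute the covariance by expanding the product
\begin{align*}
(\nu_1-\bar\nu_1)(\nu_2-\bar\nu_2) = \bigl(2\mu_1 w_1 + (w_1^2-\sigma_1^2)\bigr)\bigl(2\mu_2 w_2 + (w_2^2-\sigma_2^2)\bigr)
\end{align*}
into four pieces, taking expectations term by term. Zero-mean Gaussianity kills all odd-order monomials ($\Ex{}{w_1}$, $\Ex{}{w_2}$, $\Ex{}{w_1 w_2^2}$, $\Ex{}{w_1^2 w_2}$), leaving only $4\mu_1\mu_2\,\Ex{}{w_1 w_2}$ and $\Ex{}{(w_1^2-\sigma_1^2)(w_2^2-\sigma_2^2)}$. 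Using $\Ex{}{w_1 w_2}=\sigma_{1,2}^2$ and Isserlis' (Wick's) theorem $\Ex{}{w_1^2 w_2^2} = \sigma_1^2\sigma_2^2 + 2\sigma_{1,2}^4$, the second term simplifies to $2\sigma_{1,2}^4$, and combining yields $4\mu_1\mu_2\sigma_{1,2}^2 + 2\sigma_{1,2}^4$.

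For the variance identity, I would apply the same expansion with $u_2 = u_1$, so that $w_2=w_1$ and $\sigma_{1,2}^2 = \sigma_1^2$. The cross term $4\mu_1\Ex{}{w_1(w_1^2-\sigma_1^2)} = 4\mu_1(\Ex{}{w_1^3}-\sigma_1^2\Ex{}{w_1})$ vanishes since odd Gaussian moments are zero, and the remaining two terms give $4\mu_1^2\sigma_1^2$ and $\Ex{}{(w_1^2-\sigma_1^2)^2} = \Ex{}{w_1^4}-\sigma_1^4 = 3\sigma_1^4-\sigma_1^4 = 2\sigma_1^4$.

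No step is genuinely hard; the only point requiring care is invoking Isserlis to evaluate $\Ex{}{w_1^2 w_2^2}$ correctly as the sum over pair-matchings $\sigma_1^2\sigma_2^2 + 2\sigma_{1,2}^4$ rather than $\sigma_1^2\sigma_2^2 + \sigma_{1,2}^4$. Alternatively, one can cite the classical result directly (e.g., from the reference \texttt{foldedNormalMoments} already noted in the lemma), which is the route the paper appears to take.
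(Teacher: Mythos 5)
Your proof is correct. Note that the paper itself gives no proof of this lemma: it simply cites \citep[Sec.~3.1]{foldedNormalMoments} for the moments of the bivariate folded normal distribution. Your blind derivation --- centering via $u_m=\mu_m+w_m$, expanding $(2\mu_1 w_1+(w_1^2-\sigma_1^2))(2\mu_2 w_2+(w_2^2-\sigma_2^2))$, killing the odd-order Gaussian moments, and evaluating $\Ex{}{w_1^2w_2^2}=\sigma_1^2\sigma_2^2+2\sigma_{1,2}^4$ by Isserlis --- is exactly the standard computation the cited source performs, and every step checks out, including the variance case obtained by setting the two coordinates equal. The one genuinely error-prone point, the coefficient $2$ on $\sigma_{1,2}^4$ coming from the two nontrivial pair-matchings in Wick's theorem, is handled correctly. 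The only thing your write-up buys beyond the paper is self-containedness; conversely, the paper's citation route avoids reproving a classical fact. Either is acceptable, and you correctly identified that the citation is the path the paper takes.
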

Let $\bar\bmz=\Ex{}{\bmz}$ denote the mean vector and $\bC_\bmz=\bA \bC_{\bmx} \bA^H+\bC_{\bme^z} = \sigma_x^2 \bA \bA^H+\bC_{\bme^z}$ the covariance matrix of the ``phased'' measurements $\bmz=\bA\bmx+\bme^z$.
Then, by defining $\sigma_m^2 = [\bC_\bmz]_{m,m}$, we can compute the $m$th entry $\overline\setT(y_m)$ using \fref{lem:foldedvar} as follows:
\begin{align}\label{eq:ymeansq}
\overline\setT(y_m)= \bar{y}_m= \Ex{}{|z_m|^2+n^y_m} = \sigma_m^2 + \bar{e}^y_m.
\end{align}
Hence, in compact vector notation we have
\begin{align} \label{eq:ymeansqvector}
\overline\setT(\bmy)=\bar\bmy = \diag(\bC_{\bmz})+\bar\bme^y.
\end{align}

\paragraph{Computing $\bT$}
To compute the real-valued matrix
\begin{align}
\bT & = \Ex{}{(\setT(\bmy)-\overline\setT(\bmy))(\setT(\bmy)-\overline\setT(\bmy))^T} \notag \\
& =  \Ex{}{\setT(\bmy)\setT(\bmy)^T} - \overline\setT(\bmy)\overline\setT(\bmy)^T, \label{eq:quantityTmatrix}
\end{align}
we only need to compute the matrix $\Ex{}{\setT(\bmy)\setT(\bmy)^T}$ as the vector $\overline\setT(\bmy)$ was computed in \fref{eq:ymeansqvector}. 
We compute this matrix entry-wise as  
\begin{align*}
T_{m,m'} & = \Ex{}{(\setT(y_m)-\overline\setT(y_m))(\setT(y_{m'})-\overline\setT(y_{m'}))} \\
&=\Ex{}{y_m y_{m'}^*}-\bar y_m \bar y_{m'}^*\\
&\stackrel{\text{(a)}}{=}\Ex{}{(|z_m|^2+e^y_m)(|z_{m'}|^2+e^y_{m'}) } \\
& \quad -(\sigma_m^2 + \bar{e}^y_m)(\sigma_{m'}^2 + \bar{e}^y_{m'})\\
&=\Ex{}{|z_m|^2 |z_{m'}|^2 }- \sigma_{m'}^2\sigma_m^2 + [\bC_{\bme^y}]_{m,{m'}},
\end{align*}
where (a) follows from \fref{eq:ymeansq}. The only unknown term in the above expression is $ \Ex{}{|z_m|^2|z_{m'}|^2 }$. This term is the second moment of the random vector $[|z_m|^2,|z_{m'}|^2]$, which follows a bivariate folded normal distribution.
For $m \neq m'$, \fref{lem:foldedvar} yields
\begin{align*}
\Ex{}{|z_m|^2 |z_{m'}|^2 }  = \sigma_m^2\sigma_{m'}^2+2 \sigma^4_{m,m'}
\end{align*}
with $\sigma^2_{m,m'}=[\bC_\bmz]_{m,m'}$. 
For $m = m'$, \fref{lem:foldedvar} yields
\begin{align*}
\Ex{}{|y_m|^2} =  \Ex{}{|z_m|^4} = 3 \sigma_m^4  .
\end{align*}
Hence, we have 
\begin{align*}
T_{m,m'} & = [\bC_{\bme^y}]_{m,m'}+\left\{\begin{array}{ll}
2  \sigma_{m,m'}^4 & \text{if } m \neq m' \\
 2 \sigma_m^4  & \text{if } m = m',
\end{array}\right.
\end{align*}
which can be written in compact matrix form as
\begin{align*}
\bT=2 \bC_{\bmz} \odot \bC_{\bmz}+\bC_{\bme^y}.
\end{align*}

\paragraph{Computing $\bV_m$}
To compute the  matrices
\begin{align}
\bV_m & = \Ex{}{(\setT(y_m)-\overline\setT(y_m))(\bmx\bmx^{H}-\bK_{\bmx})} \notag \\
 & = \Ex{}{\setT(y_m)\bmx\bmx^{H}} - \overline{\setT}(y_m) \bK_{\bmx} \label{eq:quantityVmmatrix}
\end{align}
for $m=1,\ldots,M$, we only need to compute the complex-valued matrix $\Ex{}{\setT(y_m)\bmx\bmx^{H}}$ as the two other quantities $\bK_\bmx=\Ex{}{\bmx\bmx^H}$ and  $\overline{\setT}(y_m)$ are known. 
We compute this matrix entry-wise as 
\begin{align*}
[\bV_m]_{n,n'} & =  \Ex{}{(\setT(y_m)-\overline\setT(y_m))x_n x^*_{n'}} \\
& = \Ex{}{y_m x_n x^*_{n'}} - \bar y_m [\bC_{\bmx} ]_{n,n'}.
\end{align*}
Since $\bar y_m$ is known from \fref{eq:ymeansq}, we focus on computing  
\begin{align}\nonumber
& \Ex{}{y_m x_n x^*_{n'}} \\\nonumber
& = \mathbb{E} \Bigg[ \Bigg(\!\!\bigg(\sum_{j=1}^{N}\!\! A_{m,j}^* x_j^*  +e^z_{m}\bigg) \\\nonumber
& \quad \times \bigg(\!\sum_{j'=1}^{N} \!\! A_{m,j'} x_{j'} \! + \! e^z_{m}\bigg) +  e^y_m \! \Bigg)  x_n x^*_{n'}\Bigg]  \\\nonumber
&=\mathbb{E} \Bigg[ \bigg(\sum_{j=1}^{N} A_{m,j}^* x_j^* \sum_{j'=1}^{N} A_{m,j'} x_{j'}\bigg) x_n x^*_{n'} \Bigg] \\\nonumber
&\quad + \Ex{}{|e^z_{m}|^2 x_n x^*_{n'}}+\Ex{}{e^y_{m} x_n x^*_{n'}}\\\label{eq:Velements}
&= \sum_{j=1}^{N} A_{m,j}^*  \sum_{j'=1}^{N} A_{m,j'}  \Ex{}{x_j^* x_{j'} x_n x^*_{n'}}\\\nonumber
&\quad + ([\bC_{\bme^z}]_{m,m} + \bar{e}^y_m)[\bC_{\bmx} ]_{n,n'}.
\end{align}
The only unknown in the above expression is the double summation in \fref{eq:Velements}.
Since we assumed that the entries of the signal vector $\bmx$ are i.i.d., most of the terms in this summation are zero. For $n \neq n'$, there are only two nonzero terms, corresponding to the cases of $(j,j')=(n,n')$ and $(j,j')=(n',n)$.  Thus, for $n \neq n'$ we have
\begin{align}\label{eq:V_nondiag}\nonumber
& \sum_{j=1}^{N} A_{m,j}^*  \sum_{j'=1}^{N} A_{m,j'}  \Ex{}{x_j^* x_{j'} x_n x^*_{n'}} \\ 
& \qquad = 2 A_{m,n}^* A_{m,n'} \Ex{}{ |x_n|^2 |x_{n'}|^2} \nonumber \\
& \qquad \stackrel{\text{(b)}}{=} 2 A_{m,n}^* A_{m,n'} [\bC_{\bmx}]_{n,n} [\bC_{\bmx}]_{n',n'},
\end{align}
where (b) follows from \fref{lem:foldedvar}.
For $n = n'$, we have
\begin{align*}
&\sum_{j=1}^{N} A_{m,j}^*  \sum_{j'=1}^{N} A_{m,j'}  \Ex{}{x_j^* x_{j'} x_n x^*_{n}} \\
& =|A_{m,n}|^2 \Ex{}{ |x_n|^4}+\sum_{j\neq n, j=1}^{N} |A_{m,j}|^2  \Ex{}{ |x_j|^2 |x_n|^2}\\
& \stackrel{\text{(c)}}{=} 3 |A_{m,n}|^2 [\bC_{\bmx}]_{n,n}^2+\sum_{j\neq n, j=1}^{N} |A_{m,j}|^2 [\bC_{\bmx}]_{j,j} [\bC_{\bmx}]_{n,n} \\
& = 2 |A_{m,n}|^2 [\bC_{\bmx}]_{n,n}^2+\sum_{j=1}^{N} |A_{m,j}|^2 [\bC_{\bmx}]_{j,j} [\bC_{\bmx}]_{n,n}.
\end{align*}
As for \fref{eq:V_nondiag}, (c) follows from \fref{lem:foldedvar}.
By combining the above results, we have  
\begin{align*}
\bV_m =& \, 2\bC_{\bmx}^H \bma_m  \bma_m^H \bC_{\bmx} + (\bma_m^H \bC_{\bmx}  \bma_m) (\bC_{\bmx}^H \odot \bI) \\
& +([\bC_{\bme^z}]_{m,m}-\sigma_m^2) \bC_{\bmx} = 2 \sigma_x^4 \bma_m  \bma_m^H,
\end{align*}
where $\bma_m^H$ denotes the $m$th row of the matrix $\bA$.

\section{Derivation of \fref{esti:phaseinitC}}
\label{app:phaseinitC}
We now use  \fref{thm:LMSPE} to derive \fref{esti:phaseinitC} under \fref{asms:asms2}. To this end, we require the three quantities: $\overline\setT(\bmy)$, $\bT$, and~$\bV_m$, $m=1,\ldots,M$, which we derive separately. 

\paragraph{Computing $\overline\setT(\bmy)$}
To compute the real-valued vector $\overline\setT(\bmy)=\bar\bmy$  in~\fref{eq:quantityTbar}, we need the following definitions. Let $\bar\bmz=\Ex{}{\bmz}$ denote the mean vector and $\bC_\bmz=\bA \bC_{\bmx} \bA^H+\bC_{\bme^z} = \sigma_x^2 \bA \bA^H+\bC_{\bme^z}$ the covariance matrix of the ``phased'' measurements $\bmz=\bA\bmx+\bme^z$.
Then, using \fref{lem:foldedvar} with the definitions $\bar{\bmz}$ and $\bC_\bmz$, we have
\begin{align}\nonumber
\bar{y}_m&=\Ex{}{|z_m|^2+\bar{e}^{y}_m}= \Ex{}{|z_{m,\mathcal{R}}|^2+|z_{m,\mathcal{I}}|^2+ \bar{e}^{y}_m}\\\label{eq:ymeansq-complex}
&=\sigma^2_{m}+\bar{e}^y_m,
\end{align}
where we have used the definition $\sigma^2_{m}=[\bC_{\bmz}]_{m,m}$. 
Hence, in compact vector notation we have
\begin{align*}
\overline\setT(\bmy)= \bar\bmy = \diag(\bC_\bmz)+\bar{\bme}^y.
\end{align*}

\paragraph{Computing $\bT$}
To compute the real-valued matrix~$\bT$ in~\fref{eq:quantityTmatrix}, we will frequently use the following result. 
Since the vector $\bmz$ is a complex circularly-symmetric jointly Gaussian vector, we can extract the covariance matrices of the real and imaginary parts separately as:
\begin{align}\label{eq:CZ_R}
\Ex{}{\bmz_{\mathcal{I}} \bmz_{\mathcal{I}}^H}&\stackrel{\text{(a)}}{=}\Ex{}{\bmz_{\mathcal{R}} \bmz_{\mathcal{R}}^H}=\frac{1}{2}\Re\{\bC_\bmz\}=\frac{1}{2}\bC_{\bmz,\mathcal{R}}\\\label{eq:CZ_m}
\Ex{}{\bmz_{\mathcal{R}} \bmz_{\mathcal{I}}^H}&=-\Ex{}{\bmz_{\mathcal{I}} \bmz_{\mathcal{R}}^H}=\frac{1}{2}\Im\{\bC_\bmz\}=\frac{1}{2}\bC_{\bmz,\mathcal{I}},
\end{align}
where (a) follows from circular symmetry of the random vector $\bmx$.
We are now ready to compute the individual entries of~$\Ex{}{\setT(\bmy)\setT(\bmy)^T}$ as
\begin{align*}
	T_{m,m'} & = \Ex{}{(\setT(y_m)-\overline\setT(y_m))(\setT(y_{m'})-\overline\setT(y_{m'}))} \\
	&= \Ex{}{(y_m-\bar y_m)(y_{m'}-\bar y_{m'})^*} \\
	&= \Ex{}{y_my_{m'}^*} - \bar y_m \bar y_{m'}^*.
\end{align*}
The quantity $ \bar y_m$ is given by \fref{eq:ymeansq-complex}. Hence, we now compute
\begin{align*}
	& \Ex{}{y_my_{m'}^*} \\
	& = \Ex{}{(|z_m|^2+e^y_m) (|z_m'|^2+e^y_{m'})^* } \\
	&=\Ex{}{\left(|z_{m,\mathcal{R}}|^2+|z_{m,\mathcal{I}}|^2\right)\left(|z_{m',\mathcal{R}}|^2+|z_{m',\mathcal{I}}|^2\right)}\\
	& \quad +[\bC_{\bme^y}]_{m,m}\\
	&=2\Ex{}{|z_{m,\mathcal{R}}|^2|z_{m',\mathcal{R}}|^2}+2\Ex{}{|z_{m,\mathcal{R}}|^2|z_{m',\mathcal{I}}|^2} \\
	& \quad +[\bC_{\bme^y}]_{m,m}.
\end{align*}
The first two terms above are a second moment of the variables $[|z_{m,\mathcal{R}}|^2,|z_{m',\mathcal{R}}|^2]$ and $[|z_{m,\mathcal{R}}|^2,|z_{m',\mathcal{I}}|^2]$, which follow a bivariate folded normal distributions.  We first focus on $[|z_{m,\mathcal{R}}|^2,|z_{m',\mathcal{R}}|^2]$.  With \fref{lem:foldedvar}, we can calculate the moments using the covariance $\Ex{}{\bmz_{\mathcal{R}} \bmz_{\mathcal{R}}^H}$ given in \fref{eq:CZ_R}.  To this end, define $\sigma^2_{m,m',\mathcal{R}}=[\bC_{\bmz,\mathcal{R}}]_{m,m'}$ and $\sigma^2_{m,\mathcal{R}}=[\bC_{\bmz,\mathcal{R}}]_{m,m}$. Thus, we have
\begin{align*}
\Ex{}{|z_{m,\mathcal{R}}|^2|z_{m',\mathcal{R}}|^2} \!& = \!\left\{\begin{array}{ll}
 \!\! \frac{\sigma^2_{m,\mathcal{R}}}{2} \frac{\sigma^2_{m',\mathcal{R}}}{2} + \frac{\sigma^4_{m,m',\mathcal{R}}}{2},  &\!\!\!\! m\neq m' \\
\!\! 3\frac{\sigma^4_{m,\mathcal{R}}}{4},  &\!\!\!\! m=m'.
\end{array}\right.
\end{align*}
Analogously, we can compute $\Ex{}{\bmz_{\mathcal{R}} \bmz_{\mathcal{I}}^H}$ in \fref{eq:CZ_m} from the covariance matrix of $[|z_{m,\mathcal{R}}|^2,|z_{m',\mathcal{I}}|^2]$, with $\sigma^2_{m,m',\mathcal{I}}=[\bC_{\bmz,\mathcal{I}}]_{m,m'}$ and noting that $\sigma^2_{m,\mathcal{I}}=[\bC_{\bmz,\mathcal{I}}]_{m,m}=0$ as
\begin{align*}
\Ex{}{|z_{m,\mathcal{R}}|^2|z_{m',\mathcal{I}}|^2} \!& =\! \left\{\begin{array}{ll}
 \!\!  \frac{\sigma^2_{m,\mathcal{R}}}{2} \frac{\sigma^2_{m',\mathcal{R}}}{2} + 2 \frac{\sigma^4_{m,m',\mathcal{I}}}{4},  &\!\!\!\! m \neq m' \\
 \!\,3\frac{\sigma^4_{m,\mathcal{R}}}{4},  &\!\!\!\!  m = m'.
\end{array}\right.
\end{align*}
By combining the above results, we have
\begin{align*}
T_{m,m'} \!& =\! \left\{\begin{array}{ll}
\!\!\sigma^2_{m,\mathcal{R}} \sigma^2_{m',\mathcal{R}} + \sigma^4_{m,m',\mathcal{R}}+\sigma^4_{m,m',\mathcal{I}}, & \!\!   m \neq m' \\
\!\!2 \, \sigma^4_{m,\mathcal{R}}, & \!\! m = m',
\end{array}\right.\\
&\quad +[\bC_{\bme^y}]_{m,m'}-  \bar y_m \bar y_{m'}^*\\
 & = [\bC_{\bme^y}]_{m,m'} + \left\{\begin{array}{ll}
 \sigma^4_{m,m',\mathcal{R}}+\sigma^4_{m,m',\mathcal{I}}, & \!m \neq m' \\
\sigma^4_{m,\mathcal{R}}, & \!  m = m',
\end{array}\right.
\end{align*}
which can be written in matrix form as
\begin{align*}
\bT=\bC_\bmz \odot \bC_\bmz^*+\bC_{\bme^y}.
\end{align*}

\paragraph{Computing $\bV_m$}
To compute the matrices~$\bV_m$, $m=1,\ldots,M$, in \fref{eq:quantityVmmatrix}, we need the complex-valued matrix $\Ex{}{\setT(y_m)\bmx\bmx^{H}}$. 
We compute this matrix entry-wise as 
\begin{align*}
	[\bV_m]_{n,n'} & =  \Ex{}{(\setT(y_m)-\overline\setT(y_m))x_n x^*_{n'}} \\
	& = \Ex{}{y_m x_n x^*_{n'}} - \bar y_m [\bC_{\bmx}]_{n,n'}.
\end{align*}
Since $\bar y_m$ is given by \fref{eq:ymeansq-complex}, we only need to compute 
\begin{align}\label{eq:Velements-complex}\nonumber
	& \Ex{}{y_m x_n x^*_{n'}} \\ \nonumber
	& =  \mathbb{E} \Bigg[  \bigg( \Big(\sum_{j=1}^{N} \!\! A_{m,j}^* x_j^*+ e^{z*}_{m} \Big) \\\nonumber
	& \quad \qquad \times  \Big(\sum_{j'=1}^{N}  A_{m,j'} x_{j'}+e^z_{m} \Big)+e^y_m \bigg) x_n x^*_{n'}  \Bigg] \\\nonumber
	&=\sum_{j=1}^{N} A_{m,j}^*  \sum_{j'=1}^{N} A_{m,j'}  \Ex{}{x_j^* x_{j'} x_n x^*_{n'}}\\ \nonumber
	& \quad + \Ex{}{|e^z_{m}|^2 x_n x^*_{n'}}+\Ex{}{e^y_{m} x_n x^*_{n'}}\\ 
		&=\sum_{j=1}^{N} A_{m,j}^*  \sum_{j'=1}^{N} A_{m,j'}  \Ex{}{x_j^* x_{j'} x_n x^*_{n'}}\\ \nonumber
	& \quad + ([\bC_{\bme^z}]_{m,m}+\bar e^y_m)[\bC_{\bmx}]_{n,n'}.
\end{align}
We will first simplify the term 
\begin{align*}
\sum_{j=1}^{N} A_{m,j}^*  \sum_{j'=1}^{N} A_{m,j'}  \Ex{}{x_j^* x_{j'} x_n x^*_{n'}}\!.
\end{align*} 
Since we assumed that the signal vector $\bmx$ has i.i.d.\ zero-mean entries, most of the terms in this summation are zero. For $n \neq n'$, there is only one non-zero term for $(j,j')=(n,n')$. Thus, for $n \neq n'$ we have
\begin{align*}
	& \sum_{j=1}^{N} A_{m,j}^*  \sum_{j'=1}^{N} A_{m,j'}  \Ex{}{x_j^* x_{j'} x_n x^*_{n'}} \\
	& \qquad  \qquad =  A_{m,n}^* A_{m,n'} [\bC_{\bmx}]_{n,n} [\bC_{\bmx}]_{n',n'},
\end{align*}
since the term that corresponds to $(j,j')=(n',n)$, i.e. $A_{m,n'}^* A_{m,n} \Ex{}{x_{n'}^* x^*_{n'}} \Ex{}{x_{n} x_n}$, is zero.

Next, for $n = n'$, we have
\begin{align*}
	&\sum_{j=1}^{N} A_{m,j}^*  \sum_{j'=1}^{N} A_{m,j'}  \Ex{}{x_j^* x_{j'} x_n x^*_{n'}}\\
	& =|A_{m,n}|^2 \Ex{}{ |x_n|^4}+\!\!\sum_{j\neq k, j=1}^{N} \!\! |A_{m,j}|^2  \Ex{}{ |x_j|^2 |x_n|^2}\\
	&= |A_{m,n}|^2 \Ex{}{ |x_{n,\mathcal{R}}|^4}+|A_{m,n}|^2 \Ex{}{|x_{n,\mathcal{I}}|^4}\\
	&\quad + 2|A_{m,n}|^2 \Ex{}{ |x_{n,\mathcal{R}}|^2 |x_{n,\mathcal{I}}|^2} \\
	&\quad +\sum_{j\neq n, j=1}^{N} |A_{m,j}|^2   \\
	&\qquad \times\Ex{}{(|x_{j,\mathcal{R}}|^2+|x_{j,\mathcal{I}}|^2)(|x_{n,\mathcal{R}}|^2+|x_{n,\mathcal{I}}|^2)}\\
	&\stackrel{\text{(a)}}{=} 2|A_{m,n}|^2 \Ex{}{ |x_{n,\mathcal{R}}|^4} \\
	& \quad +  2\sum_{ j=1}^{N} |A_{m,j}|^2  \Ex{}{|x_{j,\mathcal{R}}|^2|x_{n,\mathcal{I}}|^2}\\
	&\quad +2\sum_{j\neq n, j=1}^{N} |A_{m,j}|^2  \Ex{}{|x_{j,\mathcal{R}}|^2|x_{n,\mathcal{R}}|^2}\\
	&\stackrel{\text{(b)}}{=} |A_{m,n}|^2 [\bC_{\bmx}]_{n,n}^2 +\sum_{ j=1}^{N} |A_{m,j}|^2 [\bC_{\bmx}]_{j,j} [\bC_{\bmx}]_{n,n},
\end{align*}
where (a) follows from circular symmetry of $\bmx$ and (b) from \fref{lem:foldedvar}.
By combining the above results, we have  
\begin{align*}
\bV_m & =\bC_{\bmx}^H \bma_m  \bma_m^H \bC_{\bmx} + (\bma_m^H \bC_{\bmx}  \bma_m) (\bC_{\bmx}^H \odot \bI) \\
& \quad +([\bC_{\bme^z}]_{m,m}-\sigma_m^2) \bC_{\bmx} = \sigma_x^4 \bma_m  \bma_m^H.
\end{align*}

\section{Derivation of \fref{esti:phaseinitExp}}
\label{app:phaseinitExp}

We now use  \fref{thm:LMSPE} to derive \fref{esti:phaseinitExp} under \fref{asms:asms3}. To this end, we require the three quantities: $\overline\setT(\bmy)$, $\bT$, and~$\bV_m$, $m=1,\ldots,M$, which we derive separately. 

\paragraph{Computing $\overline\setT(\bmy)$}
To derive an expression for~$\overline\setT(\bmy)$ in~\fref{eq:quantityTbar}, we need the following two results.
\begin{lem} \label{lem:exppreproc}
Let  $\bmu \sim \setC\setN(\bZero_{M\times1},\bSigma)$ be a complex-valued circularly-symmetric  jointly Gaussian random vector with positive definite covariance matrix $\bSigma\in\complexset^{M\times M}$. Then, for the random variable $\nu=\exp(-\bmu^H\bG\bmu )$ with positive definite $\bG\in\complexset^{M\times M}$ and $\bG+\bSigma^{-1}$ positive definite, we have the following result:
\begin{align*}
\Ex{}{\nu} = \frac{1}{|\bG\bSigma+\bI_{M}|}.
\end{align*}
\end{lem}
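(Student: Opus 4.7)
The plan is to treat the expectation as a Gaussian integral over $\complexset^M$ and recognize it as a normalization constant for a modified complex Gaussian. Specifically, I would write
\begin{align*}
\Ex{}{\exp(-\bmu^H \bG \bmu)} = \frac{1}{\pi^M |\bSigma|} \int_{\complexset^M} \exp\!\left(-\bmu^H(\bG+\bSigma^{-1})\bmu\right) d\bmu,
\end{align*}
combining the two quadratic forms in the exponent. Since $\bG+\bSigma^{-1}$ is positive definite by assumption, the integrand is proportional to the density of a complex circularly-symmetric Gaussian with covariance $(\bG+\bSigma^{-1})^{-1}$, and the integral evaluates to $\pi^M\, |(\bG+\bSigma^{-1})^{-1}| = \pi^M / |\bG+\bSigma^{-1}|$.

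Plugging back in gives
\begin{align*}
\Ex{}{\exp(-\bmu^H \bG \bmu)} = \frac{1}{|\bSigma|\cdot|\bG+\bSigma^{-1}|} = \frac{1}{|\bSigma(\bG+\bSigma^{-1})|} = \frac{1}{|\bSigma\bG + \bI_M|},
\end{align*}
and a final application of Sylvester's determinant identity $|\bI_M + \bSigma\bG| = |\bI_M + \bG\bSigma|$ yields the claimed form $1/|\bG\bSigma+\bI_M|$.

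There is no real obstacle here; the only subtlety is justifying the Gaussian integral. One needs to confirm that the integrand is absolutely integrable, which follows because $\bG+\bSigma^{-1}$ is positive definite (so the real part of the quadratic form $\bmu^H(\bG+\bSigma^{-1})\bmu$ is bounded below by $\lambda_{\min}\|\bmu\|_2^2$), and to recall the standard normalization $\int_{\complexset^M}\pi^{-M}|\bC|^{-1}\exp(-\bmu^H\bC^{-1}\bmu)\,d\bmu = 1$ for any positive definite Hermitian $\bC$. The rewrite of $|\bSigma\bG+\bI_M|$ as $|\bG\bSigma+\bI_M|$ is just Sylvester's identity applied to the $M\times M$ matrices $\bSigma$ and $\bG$.
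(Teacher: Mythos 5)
Your proof is correct and follows essentially the same route as the paper's: combine the two quadratic forms in the exponent, recognize the Gaussian normalization constant $\pi^M/|\bG+\bSigma^{-1}|$, and reduce the ratio of determinants to $1/|\bG\bSigma+\bI_M|$. The only cosmetic difference is that you multiply $|\bSigma|$ on the left and then invoke Sylvester's identity, whereas the paper simply writes $|\bG+\bSigma^{-1}|\,|\bSigma|=|(\bG+\bSigma^{-1})\bSigma|=|\bG\bSigma+\bI_M|$ directly; your added remarks on absolute integrability are a welcome bonus the paper leaves implicit.
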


\begin{proof}
We first expand the expected value into
\begin{align*}
& \Ex{}{\nu} = \Ex{}{\exp(-\bmu^H\bG\bmu)} = \\ 
& \quad \int_{\complexset^M} \exp(-\bmu^H\bG\bmu) \frac{1}{\pi^M|\bSigma|} \exp(-\bmu^H\bSigma^{-1}\bmu) \text{d}\bmu,
\end{align*}
where $|\bSigma|>0$ is the determinant of~$\bSigma$. We can now simplify the above expression as follows:
\begin{align*}
& \int_{\complexset^M} \exp(-\bmu^H\bG\bmu) \frac{1}{\pi^M|\bSigma|} \exp(-\bmu^H\bSigma^{-1}\bmu) \text{d}\bmu \\
& = \int_{\complexset^M}  \frac{1}{\pi^M|\bSigma|} \exp\big(-\bmu^H(\bG+\bSigma^{-1})\bmu\big)   \text{d}\bmu \\
& =\frac{\pi^M|(\bG+\bSigma^{-1})^{-1}|}{\pi^M|\bSigma|} \frac{1}{\pi^M|(\bG+\bSigma^{-1})^{-1}|}  \\
& \qquad \times \int_{\complexset^M}   \exp\big(-\bmu^H(\bG+\bSigma^{-1})\bmu\big)   \text{d}\bmu \\
& = \frac{|(\bG+\bSigma^{-1})^{-1}|}{|\bSigma|} =  \frac{1}{|\bG+\bSigma^{-1}||\bSigma|} = \frac{1}{|\bG\bSigma+\bI|},
\end{align*}
where we also required that $\bG+\bSigma^{-1}$ is positive definite.
\end{proof}

\begin{lem} \label{lem:negmgfofgaussian}
Let $\bmu \sim \setN(\bar\bmu,\bSigma)$ be a real-valued Gaussian random vector with mean $\bar\bmu$ and covariance $\bSigma$, and $\boldsymbol \gamma\in\reals^N$ be a given vector. Then, we have 
\begin{align*}
\Ex{}{\exp(-\boldsymbol\gamma^T\bmu)} = \exp\!\left(-\boldsymbol\gamma^T\bar\bmu+\textstyle\frac{1}{2}\boldsymbol\gamma^T\bSigma\boldsymbol\gamma\right)\!.
\end{align*}
\end{lem}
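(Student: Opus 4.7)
The statement is a standard fact about the moment generating function (MGF) of a multivariate Gaussian, so the plan is to reduce it to the scalar case rather than doing multivariate calculus from scratch.

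First, I would observe that the scalar random variable $v = \boldsymbol\gamma^T \bmu$ is an affine transformation of the jointly Gaussian vector $\bmu$, and hence is itself univariate Gaussian. Its mean is $\Ex{}{v} = \boldsymbol\gamma^T \bar\bmu$ and its variance is $\Var{v} = \boldsymbol\gamma^T \bSigma \boldsymbol\gamma$, which follows directly from linearity of expectation and the definition of covariance. So $v \sim \setN(\boldsymbol\gamma^T\bar\bmu,\, \boldsymbol\gamma^T \bSigma \boldsymbol\gamma)$.

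Second, I would apply the well-known scalar Gaussian MGF: for $w \sim \setN(\mu_w,\sigma_w^2)$, $\Ex{}{\exp(-w)} = \exp(-\mu_w + \tfrac{1}{2}\sigma_w^2)$. Setting $w = v$ and plugging in the mean and variance computed above yields exactly
\begin{align*}
\Ex{}{\exp(-\boldsymbol\gamma^T\bmu)} = \exp\!\left(-\boldsymbol\gamma^T\bar\bmu+\textstyle\frac{1}{2}\boldsymbol\gamma^T\bSigma\boldsymbol\gamma\right)\!.
\end{align*}
If a self-contained derivation of the scalar MGF is wanted, I would write the integral, translate by the mean via $u = w - \mu_w$, complete the square by writing $-u - \tfrac{u^2}{2\sigma_w^2} = -\tfrac{1}{2\sigma_w^2}(u+\sigma_w^2)^2 + \tfrac{\sigma_w^2}{2}$, and then recognize that the remaining integral is that of a (shifted) Gaussian density and equals one.

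There is no real obstacle here; the only subtlety is ensuring that the case $\boldsymbol\gamma^T\bSigma\boldsymbol\gamma = 0$ is handled (in that case $v$ is deterministic and both sides reduce to $\exp(-\boldsymbol\gamma^T\bar\bmu)$), and confirming that $\bSigma$ being positive semidefinite is enough — no invertibility of $\bSigma$ is needed because the reduction to the scalar case bypasses the multivariate density entirely.
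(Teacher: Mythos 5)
Your proposal is correct and matches the paper's proof, which simply invokes the moment generating function of a Gaussian random vector (evaluated at $-\boldsymbol\gamma$); your reduction to the scalar Gaussian MGF via the affine transformation $v=\boldsymbol\gamma^T\bmu$ is just the standard derivation of that same fact, with the degenerate case $\boldsymbol\gamma^T\bSigma\boldsymbol\gamma=0$ handled as a bonus. No gap.
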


\begin{proof}
The proof is an immediate consequence of the moment generating function of a Gaussian random vector.
\end{proof}

By considering \fref{lem:exppreproc} and \fref{lem:negmgfofgaussian} for scalar random variables, the $m$th entry of the preprocessed phaseless measurement is given by
\begin{align*}
\overline\setT(y_m) & =\Ex{}{\setT(y_m)} = \Ex{}{\exp(-\gamma |z_m|^2-\gamma [\bme^y]_m)} \\
& = \frac{1}{ \gamma[\bC_\bmz]_{m,m}+1 } \exp \!\left(-\gamma [\bar{\bme}^y]_m + \textstyle \frac{1}{2}\gamma^2[\bC_{\bme^y}]_{m,m}\right)\!.
\end{align*}
We define the following auxiliary vectors
\begin{align}
\bmq_\gamma& =\gamma\diag(\bC_\bmz)+\bOne_{M\times1} \label{eq:qquantity} \\ 
\bmp_\gamma&=\exp\!\left(-\gamma \bar\bme^y + \textstyle\frac{1}{2}\gamma^2\diag(\bC_{\bme^y})\right)\!, \label{eq:pquantity}  
\end{align} 
which enable us to rewrite the above expression in compact vector form as
\begin{align*}
\overline\setT(\bmy) = &\,\,  \bmp_\gamma \oslash \bmq_\gamma.
\end{align*}


\paragraph{Computing $\bT$}
To compute the matrix~$\bT$ in~\fref{eq:quantityTmatrix}, we only need to compute  $\Ex{}{\setT(\bmy)\setT(\bmy)^T}$, which we will compute entry-wise and in two separate steps. 
Concretely, we have
\begin{align*}
\Ex{}{\setT(y_m)\setT(y_{m'})}  = &\, \Ex{}{\exp(-\gamma(|z_m|^2+|z_{m'}|^2))} \\
& \times \Ex{}{\exp(-\gamma([\bme^y]_m+[\bme^y]_{m'}))}\!,
\end{align*}
where we compute both expected values separately.
In the first step, we compute 
\begin{align*}
 \Ex{}{\exp(-\gamma(|z_m|^2+|z_{m'}|^2))}  = \Ex{}{\exp(-\bmu^H \bG \bmu ))}\!,
\end{align*}
with $\bmu=[z_m,z_{m'}]^T$ and $\bG=\bI_2\gamma$. By invoking \fref{lem:exppreproc} with
$[\bSigma]_{m,m'}  = [\bC_\bmz]_{m,m'}$, we obtain
\begin{align*}
&  \Ex{}{\exp(-\gamma(|z_m|^2+|z_{m'}|^2))}  = \frac{1}{|\gamma\bSigma+\bI_2|} \\
& =    \frac{1}{(\gamma[\bC_\bmz]_{m,m}+1) (\gamma[\bC_\bmz]_{m',m'}+1) - \gamma^2 |[\bC_\bmz]_{m,m'}|^2}.
\end{align*}
With the definition of $\bmq_\gamma$ in \fref{eq:qquantity}, we can rewrite the above expression in vector form as
\begin{align*}
& \Ex{}{\exp(-\gamma|\bmz|^2)\exp(-\gamma|\bmz|^2)^T}  \\
& \qquad  \qquad \qquad = \bOne_{M\times M} \oslash ( \bmq_\gamma\bmq^T_\gamma  - \gamma^2 \bC_\bmz\odot \bC_\bmz^*).
\end{align*}
In the second step, we compute
\begin{align*}
\Ex{}{\exp(-\gamma([\bme^y]_m+[\bme^y]_{m'}))} = \Ex{}{\exp(-\boldsymbol\gamma^T\bmu)} 
\end{align*}
with $\bmu=[[\bme^y]_m,[\bme^y]_{m'}]^T$ and $\boldsymbol\gamma^T=[\gamma,\gamma]$. By invoking \fref{lem:negmgfofgaussian} with mean $\bar\bmu=[\bar[\bme^y]_m,[\bar\bme^y]_{m'}]$ and covariance $\bSigma$ given by the entries of the covariance matrix $\bC_{\bme^y}$ associated to the indices $m$ and $m'$, we obtain
\begin{align*}
&\Ex{}{\exp(-\gamma([\bme^y]_m+[\bme^y]_{m'}))} = \exp(-\gamma([\bar\bme^y]_m+[\bar\bme^y]_{m'})) \\
&  \qquad \times  \exp(\textstyle\frac{1}{2}\gamma^2([\bC_{\bme^y}]_{m,m}+[\bC_{\bme^y}]_{m',m'}+2[\bC_{\bme^y}]_{m,m'}) ).
\end{align*}
With the definition of $\bmp_\gamma$ in \fref{eq:pquantity}, we can rewrite the above expression in vector form as
\begin{align*}
&\Ex{}{\exp(-\gamma\bme^y)\exp(-\gamma\bme^y)^T} = (\bmp_\gamma\bmp_\gamma^T) \odot \exp(\gamma^2\textstyle\bC_{\bme^y})
\end{align*}
We furthermore have 
\begin{align*}
\overline\setT(\bmy)\overline\setT(\bmy)^T = ( \bmp_\gamma\bmp_\gamma^T) \oslash   (\bmq_\gamma\bmq_\gamma^T).
\end{align*}
By combining the two steps with the above results, we have
\begin{align*}
 \bT  = &\,  (\bmp_\gamma\bmp_\gamma^T) \odot \big(\!\exp(\gamma^2\bC_{\bme^y}) \oslash ( \bmq_\gamma\bmq^T_\gamma\!  - \gamma^2 \bC_\bmz\odot \bC_\bmz^*) \\  
& - \bOne_{M\times M} \oslash   (\bmq_\gamma\bmq_\gamma^T)\big).
\end{align*}
%


\paragraph{Computing $\bV_m$}
To compute the matrices $\bV_m$, $m=1,\ldots,M$, in \fref{eq:quantityVmmatrix}, 
we only need $\Ex{}{\setT(y_m)\bmx\bmx^{H}}$ which we will compute entry-wise and in two steps.
We have
\begin{align*}
\Ex{}{\setT(y_m)x_nx_{n'}^*} = &\, \Ex{}{\exp(-\gamma|\bma_m^H\bmx+[\bme^z]_m|^2)x_nx_{n'}^*} \\
& \times \Ex{}{\exp(-\gamma[\bme^y]_{m})},
\end{align*}
where we next compute both expected values separately. 
As a first step, we use direct integration to compute the following expected value:
\begin{align*}
& \Ex{}{\exp(-\gamma|\bma_m^H\bmx+[\bme^z]_m|^2)x_n x_{n'}^*}=\! \int_{\complexset^{N+1}}\!\!\! \exp(-\gamma|\bma_m^H\bmx+[\bme^z]_m|^2)  \\
& \quad \times \frac{1}{(\pi\sigma_x^2)^N}\exp\left(-\frac{\|\bmx\|^2}{\sigma_x^2}\right) \\
& \quad \times \frac{1}{\pi\sigma_n^2}\exp\left(-\frac{|[\bme^z]_m|^2}{\sigma_n^2}\right) x_nx_{n'}^* \text{d}\bmx \text{d} [\bme^z]_m.
\end{align*}
We define the following auxiliary quantities:
\begin{align*}
\tilde\bma_m^H & = [\, \bma^H_m, 1\,] \\
\tilde\bmx^T & = [\,\bmx^T,[\bme^z]_m\,] \\
\bC_{\tilde\bmx} & = \left[\begin{array}{cc}
\sigma_x^2 \bI_{N} & \bZero_{N\times 1} \\
\bZero_{1\times N} & \sigma_m^2 
\end{array}\right] \\
\widetilde\bK^{-1} & =\gamma\tilde\bma_m\tilde\bma^H_m+\bC^{-1}_{\tilde\bmx},
\end{align*}
where $\sigma_m^2 = \Ex{}{|[\bme^z]_m|^2} = [\bC_{\bmn^z}]_{m,m}$. 
We now derive the above expectation in compact form as
\begin{align*}
& \Ex{}{\exp(-\gamma|\tilde\bma_m^H\tilde\bmx|^2)\tilde{x}_n \tilde{x}_{n'}^*}=  \\
& =\frac{1}{(\pi\sigma_x^2)^N} \frac{1}{\pi\sigma_n^2} \int_{\complexset^{N+1}} \!\!\!\!\exp(-\gamma|\tilde\bma^H\tilde\bmx|^2\!-\tilde\bmx^H\bC^{-1}_{\tilde\bmx}\tilde\bmx)\tilde{x}_n\tilde{x}_{n'}^* \text{d}\tilde\bmx \\
& = \frac{1}{|\pi \bC_{\tilde\bmx}|}  \int_{\complexset^{N+1}} \!\!\!\exp(-\tilde\bmx^H(\gamma\tilde\bma_m\tilde\bma^H_m+\bC^{-1}_{\tilde\bmx})\tilde\bmx)\tilde{x}_n\tilde{x}_{n'}^* \text{d}\tilde\bmx \\
& = \frac{1}{|\pi\bC_{\tilde\bmx}|}  \int_{\complexset^{N+1}}\!\!\! \exp(-\tilde\bmx^H \widetilde\bK^{-1} \tilde\bmx)\tilde{x}_n\tilde{x}_{n'}^* \text{d}\tilde\bmx,
\end{align*}
where $n=1,\ldots,N+1$, $n'=1,\ldots,N+1$. 
We can further rewrite this expression as 
\begin{align*}
&  \frac{1}{|\pi\bC_{\tilde\bmx}|}  \int_{\complexset^{N+1}} \exp(-\tilde\bmx^H \widetilde\bK^{-1} \tilde\bmx)\tilde{x}_n\tilde{x}_{n'}^* \text{d}\tilde\bmx \\
& \qquad = \frac{|\pi\widetilde\bK|}{|\pi\widetilde\bK||\pi\bC_{\tilde\bmx}|}  \int_{\complexset^{N+1}} \exp(-\tilde\bmx^H \widetilde\bK^{-1} \tilde\bmx)\tilde{x}_n\tilde{x}_{n'}^* \text{d}\tilde\bmx.
\end{align*}
It is now key to realize that 
\begin{align*}
& \frac{1}{|\pi\widetilde\bK|}  \int_{\complexset^{N+1}} \exp(-\tilde\bmx^H \widetilde\bK^{-1} \tilde\bmx)\tilde{x}_n\tilde{x}_{n'}^* \text{d}\tilde\bmx \\
& \qquad \qquad = \Ex{}{\tilde{x}_n\tilde{x}_{n'}^*} = [\widetilde\bK]_{n,n'}
\end{align*}
and hence we have
\begin{align*}
& \Ex{}{\exp(-\gamma|\tilde\bma_m^H\tilde\bmx|^2)\tilde{x}_n \tilde{x}_{n'}^*} \\
& \qquad = \frac{|\widetilde\bK|}{|\bC_{\tilde\bmx}|} [\widetilde\bK]_{n,n'} = \frac{1}{|\widetilde\bK^{-1}||\bC_{\tilde\bmx}|} [\widetilde\bK ]_{n,n'} \\
 & \qquad= \frac{1}{|\gamma\tilde\bma_m\tilde\bma^H_m+\bC_{\tilde\bmx}^{-1}||\bC_{\tilde\bmx}|} [\widetilde\bK ]_{n,n'} \\
 & \qquad=  \frac{1}{|\gamma\tilde\bma_m\tilde\bma^H_m\bC_{\tilde\bmx}+\bI_{N+1}|} [\widetilde\bK ]_{n,n'}.
\end{align*}
We can now use the matrix-determinant lemma to simplify
\begin{align*}
|\gamma\tilde\bma_m\tilde\bma^H_m\bC_{\tilde\bmx}+\bI_{N+1}| & = \gamma\tilde\bma_m^H \bC_{\tilde\bmx}\tilde\bma_m+1 \\
& = \gamma(\sigma_x^2\|\bma_m\|^2+\sigma_m^2) + 1
\end{align*}
and the matrix inversion lemma to simplify 
\begin{align*}
\widetilde\bK & = (\gamma\tilde\bma_m\tilde\bma^H_m+\bC^{-1}_{\tilde\bmx})^{-1} \\
& = \bC_{\tilde\bmx} - \frac{\gamma\bC_{\tilde\bmx}\tilde\bma_m\tilde\bma^H_m\bC_{\tilde\bmx}}{\gamma\tilde\bma^H_m\bC_{\tilde\bmx}\tilde\bma_m+1} \\
& = \bC_{\tilde\bmx} - \frac{\gamma\bC_{\tilde\bmx}\tilde\bma_m\tilde\bma^H_m\bC_{\tilde\bmx}}{\gamma(\sigma_x^2\|\bma_m\|^2+\sigma_m^2)+1}.
\end{align*}
By using these two simplifications, we have
\begin{align*}
&   \Ex{}{\exp(-\gamma|\tilde\bma_m^H\tilde\bmx|^2)\tilde{x}_n \tilde{x}_{n'}^*}   \\
 & = \frac{1}{ \gamma(\sigma_x^2\|\bma_m\|^2+\sigma_m^2) + 1} \\
 & \qquad \times \left[ \bC_{\tilde\bmx} - \frac{\gamma\bC_{\tilde\bmx}\tilde\bma_m\tilde\bma^H_m\bC_{\tilde\bmx}}{\gamma(\sigma_x^2\|\bma_m\|^2+\sigma_m^2)+1} \right]_{n,n'}
\end{align*}
and since we are only interested in the upper $N\times N$ part of the matrix~$\widetilde\bK$, we have
\begin{align*}
&   \Ex{}{\exp(-\gamma|\bma_m^H\bmx+[\bme^z]_m|^2) x_n x_{n'}^*}  \\
 & = \frac{1}{ \gamma(\sigma_x^2\|\bma_m\|^2+\sigma_m^2) + 1} \\
 & \qquad \times \left[ \sigma^2_x\bI_{N} - \frac{\gamma\sigma_x^4\bma_m\bma^H_m}{\gamma(\sigma_x^2\|\bma_m\|^2+\sigma_m^2)+1} \right]_{n,n'} \\
  & = \frac{1}{ \gamma[\bC_\bmz]_{m,m} + 1}\left[ \sigma^2_x\bI_{N} - \frac{\gamma\sigma_x^4\bma_m\bma^H_m}{\gamma[\bC_\bmz]_{m,m}+1} \right]_{n,n'} 
\end{align*}
since for our assumptions 
\begin{align*}
\sigma_x^2\|\bma_m\|^2+\sigma_m^2 = [\bC_\bmz]_{m,m}.
\end{align*}
In compact matrix form, we have
\begin{align*}
&  \Ex{}{\exp(-\gamma|\bma_m^H\bmx+[\bme^z]_m|^2)  \bmx \bmx^H }   \\
 & \qquad = \frac{1}{ \gamma[\bC_\bmz]_{m,m} + 1}\left( \sigma^2_x\bI_{N} - \frac{\gamma\sigma_x^4\bma_m\bma^H_m}{\gamma[\bC_\bmz]_{m,m}+1} \right).
\end{align*}
As a second step, we use definition \fref{eq:pquantity} and obtain
\begin{align*}
 \Ex{}{\exp(-\gamma [\bme^y]_{m})} = [\bmp_\gamma]_m.
\end{align*}
By combining both steps, we obtain
\begin{align*}
\bV_m & =  \frac{[\bmp_\gamma]_m}{ \gamma[\bC_\bmz]_{m,m} + 1}\left( \sigma^2_x\bI_{N} - \frac{\gamma\sigma_x^4\bma_m\bma^H_m}{\gamma[\bC_\bmz]_{m,m}+1} \right)  \\
& \qquad - \frac{[\bmp_\gamma]_m}{ \gamma[\bC_\bmz]_{m,m}+1 } \sigma_x^2 \bI_N \\
&  = - \frac{\gamma\sigma_x^4[\bmp_\gamma]_m}{(\gamma[\bC_\bmz]_{m,m}+1)^2} \bma_m\bma^H_m,
\end{align*}
which is what we desperately wanted to show.


\bibliographystyle{icml2018}

\bibliography{confs-jrnls,publishers,phase}

\balance

\end{document}